\newcommand\mfbox[1]{\,\,\fbox{#1}\,\,}
\newcommand\K{$\mathds K$}
\newcommand\exec{\textnormal{\,\guillemotright\,}}
\newcommand\thenelse[2]{\textnormal{\bf then~}#1\textnormal{\bf~else~}#2\textnormal{\bf~end}}
\newcommand\loopend[1]{\textnormal{\bf loop~}#1\textnormal{\bf~end}}
\newcommand\create[1]{\textnormal{\bf create~}#1}
\newcommand\forget[1]{\textnormal{\bf forget~}#1}
\newcommand\call[1]{\textnormal{\bf call}\,#1}
\newcommand\assgn{\,:\!=\,}
\newcommand\lasso[2]{\textnormal{lasso(}#1,#2\textnormal{)}}
\newcommand\reg[2]{\textnormal{reg(}#1,#2\textnormal{)}}
\newcommand\cellS[2]{\small \langle\,\, #1 \,\,\rangle_{\textnormal{#2}}}
\newcommand\cellSL[2]{\small \langle\,\, #1  \,\,\ldots\rangle_{\textnormal{#2}}}
\newcommand\cellSR[2]{\small \langle\ldots\,\,  #1 \,\,\rangle_{\textnormal{#2}}}
\newcommand\cellU[3]
\newcommand\cellUL[3]
\newcommand\cellUR[3]
\mathchardef\ls="213C    
\mathchardef\gr="213E    
\newenvironment{todo}{\bigskip\hrule\medskip\noindent}{\medskip\hrule\bigskip}
\begin{document}
\bibliographystyle{abbrv}

\title{Expression-based aliasing for OO--languages}

\author{Georgiana Caltais\inst{1}}
\authorrunning{ }

\institute{
Department of Computer Science, ETH Z\"urich, Switzerland
}

\maketitle 

{\begin{abstract}
Alias analysis has been an interesting research topic in verification and optimization of programs. The undecidability of determining whether two expressions in a program may reference to the same object is the main source of the challenges raised in alias analysis.
%
In this paper we propose an extension of a previously introduced alias calculus based on program expressions, to the setting of unbounded program executions such as infinite loops and recursive calls.
Moreover, we devise a corresponding executable specification in the {\K}-framework.
An important property of our extension is that, in a non-concurrent setting, the corresponding alias expressions can be over-approximated in terms of a notion of regular expressions. This further enables us to show that the associated {\K}-machinery implements an algorithm that always stops and provides a sound over-approximation of the ``may aliasing'' information, where soundness stands for the lack of false negatives.
As a case study, we analyze the integration and further applications of the alias calculus in SCOOP. The latter is an object-oriented programming model for concurrency, recently formalized in Maude; {\K} definitions can be compiled into Maude for execution.
\end{abstract}
}

\section{Introduction}

A research direction of interest in Computer Science is the application of \emph{alias analysis} in verification and optimization of programs. One of the challenges along this line of research has been the undecidability of determining whether two expressions in a program \emph{may} reference the same object. A rich suite of  approaches aiming at providing a satisfactory balance between scalability and precision has already been developed in this regard. Examples include:
 (i) intra-procedural frameworks~\cite{Landi:1991:PAP:99583.99599,Landi:1992:USA:161494.161501} that handle isolated functions only, and their inter-procedural counterparts~\cite{Landi:1992:USA:161494.161501,Myers:1981:PID:567532.567556,Hind:1999:IPA:325478.325519} that consider the interactions between function calls;
(ii) {type-based} techniques~\cite{Diwan:1998:TAA:277652.277670};
(iii) flow-based techniques~\cite{Burke-flow-ins,Choi:1993:EFI:158511.158639} that establish aliases depending on the control-flow information of a procedure;
(iv) context-(in)sensitive approaches~\cite{Emami:1994:CIP:178243.178264,Wilson:1995:ECP:207110.207111} that depend on whether the calling context of a function is taken into account or not;
(v) field-(in)sensitive approaches~\cite{Mine:2006:FVA:1134650.1134659,Albert:2009:FVA:1693345.1693376} that depend on whether the individual fields of objects in a program are traced or not.
More details on such classifications can be found in~\cite{DBLP:conf/cpp/RobertL12}, for instance. For a comprehensive survey on alias analyses for object-oriented programs, corresponding issues and remaining open problems, we refer the interested reader to the works in~\cite{DBLP:series/lncs/SridharanCDFY13,DBLP:conf/paste/Hind01}.

Of particular interest for the work in this paper is the untyped, flow-sensitive, field sensitive, inter-procedural and context-sensitive calculus for \emph{may aliasing}, introduced in~\cite{Meyer-aliasing-13}. The aforementioned calculus covers most of the aspects of a modern object-oriented language, namely: object creation and deletion, conditionals, assignments, loops and (possibly recursive) function calls. The approach in~\cite{Meyer-aliasing-13} abstracts the aliasing information in terms of explicit access paths~\cite{DBLP:conf/pldi/LarusH88} referred to as \emph{alias expressions}. Consider, for an example, the code
\begin{equation}
\begin{array}{l}
\label{eq:code-loop-end}
x \,:= y;\\
\loopend{x \,:= x.next}
\end{array}
\end{equation}
The corresponding execution causes $x$ to become aliased to $y.next.next.\,\ldots$, with a possibly infinite number of occurrences of the field $next$. The set of associated alias expressions can be equivalently written as:
\begin{equation}
\label{eq:intor-ex}
\{[x,\,y.next^k] \mid k \geq 0\}.
\end{equation}
The sources of imprecision introduced by the calculus in~\cite{Meyer-aliasing-13} are limited to ignoring tests in conditionals, and to ``cutting at length $L$'', for the case of possibly infinite alias relation as in~(\ref{eq:intor-ex}). Intuitively, the cutting technique considers sequences longer than a given length $L$ as aliased to all expressions.

There is a huge literature on heap analysis for aliasing~\cite{DBLP:conf/paste/Hind01}, but hardly any paper that presents a calculus as  in~\cite{Meyer-aliasing-13} allowing the derivation of alias relations as the result of applying various instructions of a programming language.

Our focus is two folded. First, we want extend the framework in~\cite{Meyer-aliasing-13} to the setting of unbounded program executions such as infinite loops and recursive calls. In accordance, the goal is to provide a way to shift from ``finite'' to ``infinite behaviours''. This can be achieved in a rather straightforward manner, by redefining the construct $\loopend{p}$ in~\cite{Meyer-aliasing-13} according to the informal semantics: ``execute $p$ repeatedly any number of times, including zero''.
However,
developing a corresponding mechanism for reasoning on ``may aliasing'' in a finite number of steps is not trivial. The key observation that paves the way to a possible (finite state-based) modeling in a non-concurrent setting is that the alias expressions corresponding to loops and recursive calls grow in a regular fashion. Hence, they are finitely representable, as it is easy to see in~(\ref{eq:intor-ex}), for instance. Such regularities cannot be exploited in concurrent contexts, due to the ``non-determinism'' of process interaction.

A similar technique exploiting regular behaviour of (non-concurrent) programs, in order to reason on ``may aliasing'', was previously introduced in~\cite{Asav-aliasing-k}. In short, the results in~\cite{Asav-aliasing-k} utilize {abstract} representations of programs in terms of finite pushdown systems, for which infinite execution paths have a regular structure (or are ``lasso shaped'')~\cite{DBLP:conf/concur/BouajjaniEM97}. Then, in the style of abstract interpretation~\cite{DBLP:journals/jlp/CousotC92}, the collecting semantics is applied over the (finite state) pushdown systems to obtain the alias analysis itself.
In short, the main difference with the results in~\cite{Asav-aliasing-k} consists in how the abstract memory addresses corresponding to pointer variables are represented. In~\cite{Asav-aliasing-k} these  range over a finite set of natural numbers. In this paper we consider alias expressions build according to the calculus in~\cite{Meyer-aliasing-13}.

The work in~\cite{Asav-aliasing-k} also proposes an implementation of pushdown systems in the {\K}-framework~\cite{DBLP:journals/jlp/RosuS10}. The latter is an executable semantic framework based on Rewriting Logic (RL)~\cite{DBLP:conf/fct/MeseguerR11}, and has successfully been used for defining programming languages and corresponding formal analysis tools. Moreover, {\K} definitions have a direct implementation in {\K}-Maude~\cite{DBLP:conf/wrla/SerbanutaR10}.

We agree that it could be worth presenting our analysis as an abstract interpretation (AI)~\cite{DBLP:journals/jlp/CousotC92}. A modelling exploiting the machinery of AI (based on abstract domains, abstraction and concretization functions, Galois connections, fixed-points, {\emph etc.}) is an interesting, but different research topic per se. 

Our second interest w.r.t. may aliasing is its integration in SCOOP~\cite{DBLP:conf/acsd/MorandiSNM13} -- a simple object oriented programming model for concurrency; thus an operational based approach on handling the alias calculus is more appropriate. The basis of a RL-based framework for the design and analysis of the SCOOP model was recently set in~\cite{DBLP:conf/acsd/MorandiSNM13}. 
The reference implementation of SCOOP is Eiffel~\cite{DBLP:books/ph/Meyer91}. 
The integration of alias analysis belongs to a more ambitious goal, namely, the construction of a RL-based toolbox for the analysis of SCOOP programs (examples include a deadlock detector and a type checker).

\paragraph{Our contribution.}
{
By drawing inspiration from, and building on top of the results in~\cite{Meyer-aliasing-13,Asav-aliasing-k}, in this paper we propose:
\begin{itemize}\itemsep3pt
\item an extension of the (finite) alias calculus in~\cite{Meyer-aliasing-13} to the setting of unbounded program executions, and a sound over-approximation technique based on ``regular alias expressions'', for non-concurrent settings;
\item a RL-based specification of the extended calculus;
\item an algorithm that always terminates and provides a sound over-approximation of ``may aliasing'' by exploiting a notion of regular (finitely representable) aliases, for non-concurrent settings.
\end{itemize}
Moreover, we analyze the integration, implementation and further applications of the alias calculus in SCOOP.
}

\paragraph{{Paper structure.}}{
The paper is organized as follows. In Section~\ref{sec:alias-calc} we introduce the extension of the alias calculus in~\cite{Meyer-aliasing-13} to unbounded executions. In Section~\ref{sec:implem-k} we provide the RL-based executable specification of the calculus in the {\K} semantic framework. The implementation in SCOOP, and further applications are discussed in Section~\ref{sec:alias-SCOOP}. In Section~\ref{sec:conclusions} we draw the conclusions and provide pointers to future work. 
}

\section{The alias calculus}
\label{sec:alias-calc}

In this section we define an extension of the calculus in~\cite{Meyer-aliasing-13}, to unbounded program executions. Moreover, based on the idea behind the \emph{pumping lemma for regular languages}~\cite{Rabin:1959:FAD:1661907.1661909}, we devise a corresponding sound over-approximation of ``may aliasing'' in terms of regular expressions, applicable in sequential contexts. This paves the way to developing an algorithm for the aliasing problem, as presented in Section~\ref{sec:implem-k}, in the formal setting of the {\K} semantic framework~\cite{DBLP:journals/jlp/RosuS10}.

\paragraph{\bf Preliminaries.}{
We proceed by briefly recalling the notion of \emph{alias relation} and a series of associated notations and basic operations, as introduced in~\cite{Meyer-aliasing-13}.

We call an \emph{expression} a (possibly infinite) path of shape $x.y.z.\,\ldots$, where $x$ is a local variable, class attribute or {\it{Current}}, and $y, z, \ldots$ are attributes. Here, {\it{Current}}, also known as {\it{this}} or {\it{self}}, stands for the current object.
For an arbitrary alias expression $e$, it holds that $e . {\it Current} = {\it Current} . e = e$. 
Let $E$ represent the set of all expressions of a program. An \emph{alias relation} is a symmetric and irreflexive binary relation over $E \times E$.

Given an alias relation $r$ and an expression $e$, we define
\[
r/e = \{e\} \cup \{x:\, E \mid [x,e] \in r\}
\]
denoting the set consisting of all elements in $r$ which are aliased to $e$, plus $e$ itself.

Let $x$ be an expression; we write $r\,-\,x$ to represent $r$ without the pairs with one element of shape $x.e$.

We say that an alias relation is \emph{dot complete} whenever for any $t, u,v$ and $a$ it holds that if $[t,u]$ and $[t.a,v]$ are alias pairs, then $[u.a, v]$ is an alias pair and, moreover, if $a$ is in the domain of $t$, then $[t.a, u.a]$ is an alias pair.
By the ``domain of $t$'' we refer to a method or a field in the class corresponding to the object referred by the expression associated to $t$. For instance, given a class NODE with a field $next$ of type NODE, and a NODE object $x$, we say that $next$ is in the domain of $t \,=\, x.next.next$.
For the sake of brevity, we write {\it dot-complete}$(r)$ for the closure under dot-completeness of a relation $r$.

The notation $r[x=u]$ represents the relation $r$ augmented with pairs $[x,y]$ and made dot complete, where $y$ is an element of $u$.
}

\subsection{Extension to unbounded executions}
\label{sec:ext-inf-expr}

We further introduce an extension of the alias calculus in~\cite{Meyer-aliasing-13} to infinite alias relations corresponding to unbounded executions such as infinite loops or recursive calls. The main difference in our approach is reflected by the definition of loops, which now complies to the usual fixed-point denotational semantics.

The alias calculus is defined by a set of axioms ``describing'' how the execution a program affects the aliasing between expressions. As in~\cite{Meyer-aliasing-13}, the calculus ignores tests in conditionals and loops. The \emph{program instructions} are defined as follows:
\begin{equation}
\label{eq:BNF-control-struct}
\begin{array}{r c l}
p & ::= & p \,;\, p \mid \thenelse{p}{p} \mid\\
& & \create{x} \mid \forget{x} \mid t \assgn s \mid\\
&& \loopend{p} \mid \call{f(l)} \mid x.\call{f(l)}.
\end{array}
\end{equation}
In short, we write
$
r \exec p
$
to represent the alias information obtained by executing $p$ when starting with the initial alias relation $r$.

The axiom for sequential composition is defined in the obvious way:
\begin{equation}
\label{eq:def-seq-comp}
r \exec (p\,;\,q) = (r \exec p) \exec q.
\end{equation}

Conditionals are handled by considering the union of the alias pairs resulted from the execution of the instructions corresponding to each of the two branches, when starting with the same initial relation:
\begin{equation}
\label{eq:def-then-else}
r \exec (\thenelse{p}{q})  = r \exec p\,\, \cup\,\, r \exec q.
\end{equation}

As previously mentioned,  we define $r \exec \loopend{p}$ according to its informal semantics : ``execute $p$ repeatedly any number of times, including zero''. The corresponding rule is:
\begin{equation}
\label{eq:def-loop}
r \exec (\loopend{p}) = \bigcup_{n \in \mathds{N}} (r \exec p^n)
\end{equation}
where $\cup$ stands for the union of alias relations, as above.
This way, our calculus is extended to infinite alias relations. 
This is the main difference with the approach in~\cite{Meyer-aliasing-13} that proposes a ``cutting'' technique restricting the model to a maximum length $L$.  In~\cite{Meyer-aliasing-13}, sequences longer than $L$ are considered as aliased to all expressions. Orthogonally, for sequential settings, we provide finite representations of infinite alias relations based on over-approximating regular expressions, as we shall see in Section~\ref{sec:sound-over-approx}. 

Both the creation and the deletion of an object $x$ eliminate from the current alias relation all the pairs having one element prefixed by $x$:
\begin{equation}
\label{eq:def-create-delete}
\begin{array}{rcl}
r \exec (\create{x}) & = & r - x\\
r \exec (\forget{x}) & = & r - x.
\end{array}
\end{equation}

The (qualified) function calls comply to their initial definitions in~\cite{Meyer-aliasing-13}: 
\begin{equation}
\label{eq:def-qualified-call}
\begin{array}{rcl}
r \exec (\call{f(l)}) & = & (r[f^\bullet:l])\exec \mid f \mid\\
r \exec (x.\call{f(l)}) & = & x.((x'.r) \exec \call{f(x'.l)}).
\end{array}
\end{equation}
Here $f^\bullet$ and $\mid f \mid$ stand for the formal argument list and the body of $f$, respectively, whereas $r[u:v]$ is the relation $r$ in which every element of the list $v$ is replaced by its counterpart in $u$. Intuitively, the negative variable $x'$ is meant to transpose the context of the qualified call to the context of the caller. Note that ``$.$'' ({\it i.e.}, the constructor for alias expressions) is generalized to distribute over lists and relations:
$x.[a,b,\ldots] = [x.a, x.b, \ldots]$.

For an example, consider a class $C$ in an OO-language, and an associated procedure $f$ that assigns a local variable $y$, defined as: $f(x) \,\{ \,\, y \assgn x \,\, \}$.
Then, for instance, the aliasing for $a.\call{f(a)}$ computes as follows:
\[
\begin{array}{rc}
\emptyset \,\,\exec\,\, a.\call{f(a)} & = \\
a.(a'.\emptyset \,\,\exec\,\, y\,:=a'.a) & =\\
a.(\emptyset \,\,\exec\,\, y\,:=\textnormal{\it Current}) & =\\
\textnormal{\it dot-complete}(\{[a.y, a]\}).
\end{array}
\]

Recursive function calls can lead to infinite alias relations. In sequential settings, as for the case of loops, the mechanism exploiting sound regular over-approximations in order to derive finite representations of such relations is presented in the subsequent sections.

The axiom for assignment is as well in accordance with its original counterpart in~\cite{Meyer-aliasing-13}:
\begin{equation}
\label{eq:def-assign}
\begin{array}{rcl}
r \exec (t \assgn s) & = & {\textnormal{\bf given~}} r_{1} = r[ot = t]\\
&& {\textnormal{\bf then~}} (r_{1} - t)[t\,=\,(r_1 \slash s \,-\,t)] - ot  {\textnormal{\bf~end}}\\
\end{array}
\end{equation}
where $ot$ is a fresh variable (that stands for ``old $t$'').
Intuitively, the aliasing information w.r.t. the initial value of $t$ is ``saved'' by associating $t$ and $ot$ in $r$ and closing the new relation under dot-completeness, in $r_1$. Then, the initial $t$ is ``forgotten'' by computing $r_1 - t$ and the new aliasing information is added in a consistent way. Namely, we add all pairs $(t, s')$, where $s'$ ranges over {$r_1 \slash s \,-\, t$} representing all expressions already aliased with $s$ in $r_1$, including $s$ itself, but without $t$. Recall that alias relations are not reflexive, thus by eliminating $t$ we make sure we do not include pairs of shape $[t, t]$. Then, we consider again the closure under dot-completeness and forget the aliasing information w.r.t. the initial value of $t$, by removing $ot$.

\begin{remark}
It is worth discussing the reason behind \emph{not} considering transitive alias relations.
Assume the following program: 
\[
\thenelse{x \assgn y}{y \assgn z}
\]
Based on the equations~(\ref{eq:def-then-else}) and~(\ref{eq:def-assign}) handling conditionals and assignments, respectively, the calculus correctly identifies the alias set: $\{[x, y], [y, z]\}$. Including $[x, z]$ would be semantically equivalent to the execution of the two branches in the conditional at the same time, which is not what we want.
\end{remark}


\subsection{A sound over-approximation}
\label{sec:sound-over-approx}

\label{rm:unfolding}
In a sequential setting, the challenge of computing the alias information in the context of (infinite) loops and recursive calls reduces to evaluating their corresponding ``unfoldings'', captured by expressions of shape
\[
r \exec p^{\omega},
\]
with $\omega$ ranging over naturals plus infinity, r an (initial) alias relation ($r = \emptyset$), and $p$ a \emph{basic control block} defined by:
\begin{equation}
\label{eq:BNF-basic-struct}
\begin{array}{r c l}
p & ::= & p \,;\, p \mid \thenelse{p}{p} \mid\\
& & \create{x} \mid \forget{x} \mid\\
&& t \assgn s.
\end{array}
\end{equation}
The value $r \exec p^{\omega}$ refers to the alias relation obtained by recursively executing the control block $p$, and it is calculated in the expected way:
\[
\begin{array}{rcl}
r \exec p^{0} & = & r\\
r \exec p^{k+1} & = & (r \exec p^{k}) \exec p.
\end{array}
\]

Consider again the code in~(\ref{eq:code-loop-end}):
\[
\begin{array}{l}
x \assgn y;\\
\loopend{x \assgn x.next}.
\end{array}
\]
Its execution generates the alias relation
\[
(((\emptyset \exec (x \assgn y)) \exec (x \assgn x.next)) \exec (x \assgn x.next) \ldots
\]
including an infinite number of pairs of shape: 
\begin{equation}
\label{eq:inf-rel-next}
[x, y.next],\, [x, y.next.next],\, [x, y.next.next.next] \ldots~~.
\end{equation}
A similar reasoning does not hold for concurrent applications, where process interaction is not ``regular''.

In what follows we provide a way to compute finite representations of infinite alias relations in sequential settings.
The key observation is that alias expressions corresponding to unbounded program executions grow in a regular fashion. See, for instance, the aliases in~(\ref{eq:inf-rel-next}), which are pairs of type $[x, y.next^{k \geq 1}]$.

Regular expressions are defined similarly to the regular languages over an alphabet.
We say that an expression is \emph{regular} if it is a local variable, class attribute or {\it{Current}}. Moreover, the concatenation $e_1\,.\,e_2$ of two regular expressions $e_1$ and $e_2$ is also regular. Given a regular alias expression $e$, the expression $e^*$ is also regular; here $(-)^*$ denotes the Kleene star~\cite{Kleene56}. We call an alias relation \emph{regular} if it consists of pairs of regular expressions.

\begin{lemma}
\label{lm:sequential-regularity}
Assume $p$ a program built according to the rules in~(\ref{eq:BNF-control-struct}).
Then, in a sequential setting, the relation $\emptyset \exec p$ is regular.
\end{lemma}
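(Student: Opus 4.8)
The plan is to prove, by structural induction on $p$, the stronger claim that for \emph{every} regular alias relation $r$ the relation $r \exec p$ is regular; Lemma~\ref{lm:sequential-regularity} then follows, since $\emptyset$ is vacuously regular. The backbone of the induction is a small preservation lemma: the three primitive operations out of which the axioms for the constructs in~(\ref{eq:BNF-control-struct}) are built --- adjoining pairs (as in $r[x=u]$), closing under dot-completeness, and deleting pairs (as in $r-x$) --- all send regular relations to regular relations. Deletion is immediate, since the output is a subset of the pairs of $r$; adjoining pairs $[x,y]$ with $x$ a variable and $y$ regular is immediate too; and dot-completeness only ever adds pairs of shape $[u.a,v]$ and $[t.a,u.a]$, obtained by appending a single attribute $a$ to already-present regular expressions, so closure under concatenation of regular expressions keeps us inside the regular fragment --- using, for sequential programs, that there are only finitely many classes and attributes, so this closure does not generate unbounded new expression shapes.

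Granting this, the non-looping, non-recursive cases are routine. For sequential composition~(\ref{eq:def-seq-comp}) one applies the induction hypothesis twice; for conditionals~(\ref{eq:def-then-else}) one applies it to both branches and uses that a union of regular relations is regular (a union of sets of pairs of regular expressions is again such a set); for $\create{x}$ and $\forget{x}$~(\ref{eq:def-create-delete}) the output is a subset of $r$; and for assignment~(\ref{eq:def-assign}) one tracks regularity through $r[ot=t]$, the intermediate dot-completions, $r_{1}-t$, $r_{1}[t=(r_{1}/s-t)]$ and the final removal of $ot$, each step covered by the preservation lemma.

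The crux is the loop axiom~(\ref{eq:def-loop}), $r \exec (\loopend{p}) = \bigcup_{n\in\mathds{N}} r \exec p^{n}$, together with recursive function calls~(\ref{eq:def-qualified-call}), where the induction hypothesis alone is insufficient because the union ranges over infinitely many unfoldings. Here one uses the pumping-lemma-style observation already highlighted in the paper: since $p$ is a fixed basic control block~(\ref{eq:BNF-basic-struct}) over finitely many variables and attributes, every extra iteration can extend an alias expression only by appending one of finitely many bounded ``increments'' read off from the text of $p$. Hence the sequence of relations $r \exec p^{n}$ is eventually periodic: there are $N$ and a finite block $w$ such that, for all $k$, the expressions of $r \exec p^{N+k}$ are those of $r \exec p^{N}$ with a $k$-fold repetition of $w$ inserted at a fixed position. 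Replacing that repetition by the Kleene star $w^{*}$ yields a single regular relation that captures --- indeed soundly over-approximates, in the sense of Section~\ref{sec:sound-over-approx} --- the relation $\bigcup_{n} r \exec p^{n}$; regularity is then immediate, since regular expressions are closed under concatenation and $(-)^{*}$. The same periodicity argument, carried out on the chain of pending activations instead of on loop iterations, disposes of $\call{f(l)}$ and $x.\call{f(l)}$: a recursive call either bottoms out after finitely many unfoldings or its unfoldings stabilize up to a repeated block, which is again absorbed by a star.

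The main obstacle is making this eventual-periodicity step precise. One needs a well-chosen measure on alias expressions --- for instance, a bound, depending only on $p$, on the length of the ``fresh suffix'' that a single iteration can create --- together with an argument that this bound forces the sequence $(r \exec p^{n})_{n}$ to enter a cycle, and a check that collapsing the cyclic part with a Kleene star loses no pair (soundness) and stays within the regular fragment. The remaining cases --- union, subset, and the assignment bookkeeping --- are routine once the preservation lemma for the three primitive operations is in place.
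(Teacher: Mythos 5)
Your proposal follows essentially the same route as the paper's own proof in Appendix~\ref{sec:reg-expr}: a preservation observation for the primitive operations ($r-x$, $r[x=u]$, dot-completeness), structural induction for the loop-free constructs, reduction of (mutually) recursive calls to loop-like unfoldings, and an appeal to the regular/periodic growth of expressions to handle $\loopend{p}$. If anything, you are more explicit than the paper about the one genuinely delicate step (the eventual periodicity of $(r \exec p^{n})_{n}$ and its collapse via Kleene star), which the paper dispatches with a brief induction on the number of nested loops, so the two arguments match in substance and in their level of rigor.
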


\begin{proof}
The result follows by induction on the structure of $p$. We refer to Appendix~\ref{sec:reg-expr} for the detailed proof.
\end{proof}

Inspired by  the idea behind the \emph{pumping lemma for regular languages}~\cite{Rabin:1959:FAD:1661907.1661909}, we define a \emph{lasso} property for alias relations, which identifies the repetitive patterns within the structure of the corresponding alias expressions.
The intuition is that such patterns will occur for an infinite number of times due to the execution of loops or recursive function calls.
Then, we supply sound over-approximations of ``lasso'' relations, based on regular alias expressions.

In the context of alias relations, we say that the lasso property is satisfied by $r$ and $r'$ whenever the following two conditions hold: (1) $r$ behaves like a \emph{lasso base} of $r'$. Namely, all the pairs $[e_1, e_2] \in r$ are used to generate  elements $[e'_1, e'_2] \in r'$, by repeating tails of prefixes of $e_1$ and $e_2$, respectively, and (2) $r'$ is a \emph{lasso extension} of $r$. Namely, all the pairs in $r'$ are generated from elements of $r$ by repeating tails of their prefixes.
For example, if $e_1$ above is an expression of shape $x.y.z.w$, then $e'_1$ can be $x.y.y.z.w$ if we consider the tail $y$ of the prefix $x.y$, or $x.y.z.y.z.w$ if we take the tail $y.z$ of the prefix $x.y.z$.

Formally, consider $r$ and $r'$ two alias relations, and $x_i, y_i$ and $z_i$ a set of (possibly empty) expressions, for $i \in \{1,2\}$. Then:
\begin{equation}
\label{eq:def-lasso}
\lasso{r}{r'} = 
([x_1 y_1 z_1, x_2 y_2 z_2] \in r \textnormal{~~iff~~} [x_1 y_1 y_1 z_1, x_2 y_2 y_2 z_2] \in r').
\end{equation}
For the simplicity of notation we sometimes omit the dot-separators between expressions. For instance, we write $x\,y\,z$ in lieu of $x.y.z$.

Assuming a lasso over $r$ and $r'$, we compute a relation consisting of regular expressions over-approximating $r$ and $r'$ as:
\begin{equation}
\label{eq:def-reg}
\begin{array}{rcl}
\reg{r}{r'} & = & \{[x_1 y_1^* z_1, x_2 y_2^* z_2] \,\mid\\
&& \,\,\, [x_1 y_1 z_1, x_2 y_2 z_2] \in r\,\land\\
&& \,\,\, [x_1 y_1 y_1 z_1, x_2 y_2 y_2 z_2] \in r'\} 
\end{array}
\end{equation}
where $x_i, y_i$ and $z_i$ are possibly empty expressions, for $i \in \{1,2\}$. As previously indicated, the over-approximation is sound w.r.t. the repeated application of a basic control block as in~(\ref{eq:BNF-basic-struct}), in the way that it does not introduce any false negatives:

\begin{lemma}
\label{lm:reg-expr}
Consider $r$ and $r'$ two alias relations, and $p$ a basic control block in a sequential setting. If $r\exec p = r'$ and $\lasso{r}{r'} = true$, then the following holds for all $n\geq 1$:
\[
r \exec p^{n} \in \reg{r}{r'}.
\]
\end{lemma}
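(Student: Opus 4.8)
The plan is to prove, by induction on $n$, a statement slightly stronger than the one required: that at every step the pair $(r\exec p^{n-1},\,r\exec p^{n})$ still satisfies the lasso property and that its associated regular relation is absorbed by $\reg{r}{r'}$. Reading the membership ``$\in$'' of the statement as containment of alias relations \emph{after expanding Kleene stars}, the reduction is this: it suffices to show, for every $n\ge 1$, that $\lasso{r\exec p^{n-1}}{r\exec p^{n}}=true$ and $\reg{r\exec p^{n-1}}{r\exec p^{n}}\subseteq\reg{r}{r'}$. Indeed, directly from~(\ref{eq:def-reg}) and~(\ref{eq:def-lasso}) one checks that $s\exec p\subseteq\reg{s}{s\exec p}$ whenever $\lasso{s}{s\exec p}=true$: each pair $[x_1y_1y_1z_1,x_2y_2y_2z_2]$ of $s\exec p$ is, by the ``lasso extension'' half of~(\ref{eq:def-lasso}), obtained from a pair $[x_1y_1z_1,x_2y_2z_2]$ of $s$, and is therefore the instance with both stars set to $2$ of the regular pair $[x_1y_1^*z_1,x_2y_2^*z_2]\in\reg{s}{s\exec p}$. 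Chaining the two invariants then gives $r\exec p^{n}\subseteq\reg{r\exec p^{n-1}}{r\exec p^{n}}\subseteq\reg{r}{r'}$, which is the lemma.

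The induction hinges on a single \emph{propagation} claim: if $\lasso{s}{s\exec p}=true$, then $\lasso{s\exec p}{(s\exec p)\exec p}=true$ and $\reg{s\exec p}{(s\exec p)\exec p}\subseteq\reg{s}{s\exec p}$. With this in hand the base case $n=1$ is exactly the hypothesis $\lasso{r}{r'}=true$ (and $\reg{r}{r'}\subseteq\reg{r}{r'}$), and the inductive step is one application of the claim to $s=r\exec p^{n-1}$, the chain of $\reg$-containments composing so that $\reg{r\exec p^{n}}{r\exec p^{n+1}}\subseteq\reg{r}{r'}$. The intuitive content of the propagation claim is that a basic control block is blind to the \emph{multiplicity} of a repeated block already buried inside a prefix: since $\lasso{s}{s\exec p}=true$ records that running $p$ once inserted exactly one extra copy of each such block $y_i$, running $p$ again from $s\exec p$ does the analogous thing, inserting one more copy of the \emph{same} blocks; hence $(s\exec p)\exec p$ is a genuine lasso extension of $s\exec p$ over those same blocks, and the regular pairs witnessed do not change (up to moving some copies of $y_i$ out of the middle into $x_i$ or $z_i$, which only shrinks the language of the star).

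Making ``$p$ is blind to the multiplicity of repeated blocks'' precise is the heart of the argument, and I would establish it by structural induction on the basic control block $p$, using the axioms~(\ref{eq:def-seq-comp}), (\ref{eq:def-then-else}), (\ref{eq:def-create-delete}) and~(\ref{eq:def-assign}). Sequential composition and conditionals follow from the induction hypothesis together with the fact that relational composition and union respect the ``pumped'' shape $[x_1y_1^kz_1,x_2y_2^kz_2]$ of pairs; the creation and deletion cases are bookkeeping, since whether a pair is prefixed by $x$ is insensitive to how many copies of $y_i$ sit inside it. The genuine obstacle is the assignment $t\assgn s$: one must follow the fresh variable $ot$, the dot-completion, and the slices $r_1/s$ and $r_1/s-t$ of~(\ref{eq:def-assign}) as they act on expressions in which a tail of a prefix has been repeated some number of times, and argue that the repeated block is neither split across the boundary between the ``kept'' and the ``rewritten'' part of a pair nor otherwise consumed, so that the rewriting performed on $s\exec p$ is \emph{literally} the one performed on $s$, one nesting level deeper. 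Here the lasso hypothesis is essential: the fact that the first execution of $p$ already produced a clean single-copy insertion certifies that $t$ and $s$ do not reach into a repeated block, and this property survives verbatim to every later iteration. Dot-completion needs one auxiliary observation, namely that the dot-complete closure of a relation all of whose pairs have the pumped shape is again of that shape (for a possibly larger set of base pairs), because dot-completeness only \emph{appends} attributes to existing expressions and so cannot destroy a repeated block. Assembling the cases yields the propagation claim, and with it the lemma. The single hard point is thus the assignment-plus-dot-completion case of the structural induction; everything else is routine.
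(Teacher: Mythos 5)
Your proposal is correct in substance but takes a genuinely different route from the paper. The paper proves the lemma by strong induction on $n$, with the base case exactly as you have it, and handles the inductive step by \emph{reductio ad absurdum}: assuming $(r\exec p^{n})\exec p\notin\reg{r}{r'}$, it localizes the offending pair to an assignment $t\assgn s$ inside $p$ (sequencing, conditionals, creation and deletion being unable to break the pumped shape), and then argues that since this same assignment was already executed in an earlier iteration without leaving a non-regular pair behind --- either its new pairs conformed, or the non-conforming ones were erased by a subsequent $\create{u}$ or $\forget{u}$ --- it cannot produce one now. You instead strengthen the induction invariant to assert that the lasso property itself propagates, i.e.\ $\lasso{r\exec p^{n-1}}{r\exec p^{n}}$ holds at every step and the associated regular relations are absorbed (as languages) into $\reg{r}{r'}$, and you discharge the step by a forward structural induction on $p$. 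The two arguments meet at the same hard point --- the interaction of $t\assgn s$, the fresh variable $ot$, the slice $r_1/s-t$ and dot-completion with expressions containing a repeated block --- and both treat it at a comparable, informal level of detail, so your sketch is no less complete than the paper's at the crux. What your version buys is an explicit statement of an invariant the paper leaves implicit (that each successive pair $(r\exec p^{n-1},r\exec p^{n})$ is again a lasso over the same repeated blocks), which is precisely what justifies the {\K}-rules stopping at the \emph{first} detected lasso; the cost is the extra bookkeeping of the containments $\reg{r\exec p^{n-1}}{r\exec p^{n}}\subseteq\reg{r}{r'}$, which your observation that copies of $y_i$ migrating into $x_i$ or $z_i$ only shrink the star's language handles correctly. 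Your reading of ``$\in$'' as per-pair instantiation of the Kleene stars agrees with how the paper itself uses the notation in its base case.
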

\begin{proof}
The reasoning is by induction on $n$. The base case follows immediately, whereas the induction step is proved by ``reductio ad absurdum''. A detailed proof is included in Appendix~\ref{sec:soundness}.
\qed
\end{proof}

\section{A {\K}-machinery for collecting aliases}
\label{sec:implem-k}

In this section we provide the specification of a RL-based mechanism collecting the alias information in the {\K} semantic framework~\cite{DBLP:journals/jlp/RosuS10}. We choose {\K} more as a notational convention to enable compact and modular definitions. In reality, the {\K}-rules in this section are implemented in Maude, as rewriting theories, on top of the formalization of SCOOP~\cite{DBLP:conf/acsd/MorandiSNM13} (we refer to Section~\ref{sec:alias-SCOOP} for more details on our approach).

In short, our strategy is to start with a program built on top of the control structures in~(\ref{eq:BNF-control-struct}), then to apply the corresponding {\K}-rules in order to get the ``may aliasing'' information in a designated {\K}-cell ($\cellS{-}{\textnormal{al}}$). Independently of the setting (sequential or concurrent) one can exploit this approach in order to evaluate the aliases of a given finite length $L$. We also show that for sequential contexts, the application of the {\K}-rules is finite and the aliases in the final configuration soundly over-approximate the (infinite) ``may alias'' relations of the calculus.

\begin{paragraph}{\bf Brief overview of {\K}.}{
{\K}~\cite{DBLP:journals/jlp/RosuS10} is an executable semantic framework based on Rewriting Logic~\cite{DBLP:conf/fct/MeseguerR11}. It is suitable for defining (concurrent) languages and corresponding formal analysis tools, with straightforward implementation in {\K}-Maude~\cite{DBLP:conf/wrla/SerbanutaR10}. {\K}-definitions make use of the so-called \emph{cells}, which are labelled and can be nested, and (rewriting) \emph{rules} describing the intended (operational) semantics.

A \emph{cell} is denoted by $\cellS{-}{\textnormal{[name]}}$, where [name] stands for the \emph{name of the cell}. A construction $\cellS{.}{\textnormal{n}}$ stands for an \emph{empty cell} named n. We use ``pattern matching'' and write $\cellSL{c}{\textnormal{n}}$ for a cell with content $c$ at the top, followed by an arbitrary content ($\ldots$). Orthogonally, we can utilize cells of shape $\cellSR{c}{\textnormal{n}}$ and $\cellS{\ldots c \ldots}{{\textnormal{n}}}$, defined in the obvious way.

Of particular interest is $\cellS{-}{\textnormal{k}}$ -- the \emph{continuation cell}, or the \emph{$k$-cell}, holding the stack of program instructions (associated to one processor), in the context of a programming language formalization. We write
\[
\cellSL{i_1  \curvearrowright i_2}{\textnormal{k}}
\]
for a set of instructions to be ``executed'', starting with instruction $i_1$, followed by $i_2$. The associative operation $ \curvearrowright$ is the instruction sequencing.

A {\K}-rewrite rule
\begin{equation}
\label{eq:def-k-rew-rule-ex}
\cellSL{c}{\textnormal{$n_1$}}
\cellS{c'}{\textnormal{$n_2$}}
~\Rightarrow~
\cellSL{c'}{\textnormal{$n_1$}}
\cellSR{c'}{\textnormal{$n_3$}}
\end{equation}
reads as: if cell \textnormal{${n_1}$} has $c$ at the top and cell \textnormal{${n_2}$} contains value $c'$, 
then $c$ is replaced by $c'$ in \textnormal{${n_1}$} and $c'$ is added at the end of the cell \textnormal{${n_3}$}. The content of \textnormal{${n_2}$} remains unchanged.
}
In short,~(\ref{eq:def-k-rew-rule-ex}) is written in a {\K}-like syntax as:
\[
\cellUL{c}{c'}{\textnormal{$n_1$}}
\cellS{c'}{\textnormal{$n_2$}}
\cellUR{.}{c'}{\textnormal{$n_3$}}.
\]
\end{paragraph}

We further provide the details behind the {\K}-specification of the alias calculus. As expected, the $k$-cell retains the instruction stack of the object-oriented program.
We utilize cells
$
\langle - \rangle_{\textnormal{al}}
$
to enclose the current alias information, and the so-called \emph{back-tracking cells} 
$
\langle - \rangle_{\textnormal{bkt-\ldots}}
$
enabling the sound computation of aliases for the case of \mbox{$\thenelse{\!\!-\!\!}{\!\!-\!\!}$} and, in non-concurrent contexts, for loops and (possibly recursive) function calls.
As a convention, we mark with ($\clubsuit$) the rules that are sound only for non-concurrent applications, based on Lemma~\ref{lm:reg-expr}.
Due to space limitations, in what follows we introduce only the {\K}-rules for handling assignments and loops. The entire specification is included in Appendix~\ref{sec:calculus-complete-spec}.

As expected, the assignment rule simply restores the current alias relation according to its axiom in~(\ref{eq:def-assign}), and removes the assignment instruction from the top of the $k$-cell:

\begin{equation}
\label{eq:k-assign}
\cellU{r}{(r_{1} - t)[t\,=\,(r_1 \slash s \,-\,t)] - ot}{al}\cellUL{t \assgn s}{.}{k}\,\,\,\,\,
\textnormal{with~} r_{1} = r[ot = t]
\end{equation}

For $\loopend{p}$, we utilize a meta-construction $p \mfbox{l} \loopend{p}$ simulating the unfolding corresponding to~(\ref{eq:def-loop}), and a back-tracking stack $\langle - \rangle_{\textnormal{bkt-l}}$ collecting the alias information obtained after each execution of $p$. Moreover, the {\K}-implementation exploits the result in Lemma~\ref{lm:reg-expr}.
Whenever a ``lasso'' is reached, the infinite rewriting is prevented by resuming the infinite application of $p$ in terms of a sound over-approximating alias relation. The {\K}-rules are as follows.

First, the aforementioned unfolding is performed, and the alias relation before $p$ is stored in the back-tracking cell as $\langle r \rangle_{\textnormal{al-o}} \langle p \rangle_{\textnormal{l}}$:

\begin{equation}
\label{eq:k-loop-whole}
\cellS{r}{al}
\cellUL{\loopend{p}}{p \mfbox{l} \loopend{p}}{k}
\cellUL{.}{\cellS{r}{al-o}\cellS{p}{l}}{bkt-l}
\end{equation}

If the alias relation $r'$ obtained after the successful execution of $p$ (marked by $\mfbox{l}$ at the top of the continuation) is not a lasso of the aliasing $r$ before $p$ (previously stored in $\langle -\rangle_{\textnormal{bkt-l}}$) then $p$ is constrained to a new execution by becoming the top of the $k$-cell, and $r'$ is memorized for back-tracking:

\begin{equation}
\label{eq:k-loop-not-lasso}
\cellS{r'}{al}
\cellUL{\mfbox{l} \loopend{p}}{p \mfbox{l} \loopend{p}}{k}
\cellUL{\cellS{r}{al-o}\cellS{p}{l}}{\cellS{r'}{al-o}\cellS{p}{l}}{bkt-l}
\textnormal{~if not~} \lasso{r}{r'}~~(\clubsuit)
\end{equation}

Last, if a lasso is reached after the execution of $p$, then the current aliasing is soundly replaced by a ``regular'' over-approximation $\reg{r}{r'}$, the corresponding back-tracking information is removed from $\langle -\rangle_{\textnormal{bkt-l}}$ and the {\bf loop} instruction is eliminated from the $k$-cell:

\begin{equation}
\label{eq:k-loop-lasso}
\cellU{r'}{\reg{r}{r'}}{al}\!\!
\cellUL{\mfbox{l} \loopend{p}}{.}{k}\!\!
\cellUL{\cellS{r}{al-o}\cellS{p}{l}}{.}{bkt-l}
\textnormal{~if~} \lasso{r}{r'}~~(\clubsuit)
\end{equation}

In a non-concurrent setting, the machinery orchestrating the {\K}-rules introduced in this section, and thoroughly discussed in Appendix~\ref{sec:calculus-complete-spec}, implements an algorithm that always terminates and provides a sound over-approximation of ``may aliasing''.

\begin{theorem}
\label{th:dec-proc}
Consider $p$ a program built on top of the control structures in~(\ref{eq:BNF-control-struct}), that executes in a sequential setting. Then, the application of the corresponding {\K}-rules when starting with $p$ and an empty alias relation, is a finite rewriting of shape
\[
\cellS{\emptyset}{\textnormal{al}}\cellS{p}{\textnormal{k}}\,\,\xRightarrow{(*)}\,\,\cellS{r}{\textnormal{al}}\cellS{.}{\textnormal{k}},
\]
with $r$ a sound over-approximation of the aliasing information corresponding to the execution of $p$.
\end{theorem}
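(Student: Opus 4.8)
The plan is to split the statement into a \emph{termination} claim (the rewriting reaches a configuration with empty $k$-cell after finitely many steps) and a \emph{soundness} claim (the relation $r$ left in the $\langle-\rangle_{\textnormal{al}}$ cell contains every pair of $\emptyset\exec p$), and to prove both simultaneously by induction on the structure of $p$ as generated by~(\ref{eq:BNF-control-struct}), strengthening the hypothesis to an arbitrary initial relation: \emph{for every alias relation $\rho$, running the {\K}-rules on $\cellS{\rho}{\textnormal{al}}\cellS{p}{\textnormal{k}}$ terminates in some $\cellS{r}{\textnormal{al}}\cellS{.}{\textnormal{k}}$ with $\rho\exec p\subseteq r$}. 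The base cases $\create{x}$, $\forget{x}$, $t\assgn s$ are discharged by a single rule each (e.g.~(\ref{eq:k-assign})), which rewrites $\langle-\rangle_{\textnormal{al}}$ to exactly the value prescribed by the corresponding axiom~(\ref{eq:def-create-delete}),~(\ref{eq:def-assign}); here soundness in fact holds with equality. The case $p_1;p_2$ follows from the sequencing of the $k$-cell, equation~(\ref{eq:def-seq-comp}), and two uses of the induction hypothesis; $\thenelse{p_1}{p_2}$ follows by running the two branches from the same $\rho$ stored in the appropriate back-tracking cell and taking the union, using~(\ref{eq:def-then-else}) and monotonicity of the rules; and a call $\call{f(l)}$ / $x.\call{f(l)}$ whose unfolding is acyclic reduces, via the substitution in~(\ref{eq:def-qualified-call}), to the induction hypothesis applied to the (structurally smaller) body.

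The heart of the proof is $\loopend{q}$ --- and, handled by the very same $\mfbox{l}$/$\langle-\rangle_{\textnormal{bkt-l}}$ mechanism, a recursive call, which one views uniformly as iterating the body. By the induction hypothesis each individual unfolding of $q$ terminates, so divergence could only arise if the lasso test in~(\ref{eq:k-loop-not-lasso}) never succeeded and $q$ were re-launched forever. I would rule this out with a pumping/pigeonhole argument: let $r_0=\rho$ and $r_{n+1}=r_n\exec q$ be the relations appearing in $\langle-\rangle_{\textnormal{al}}$ after successive unfoldings. By Lemma~\ref{lm:sequential-regularity} the relation $\rho\exec\loopend{q}=\bigcup_n r_n$ is regular, so every $r_n$ is obtained from one fixed finite family of patterns $[x_1 y_1^* z_1, x_2 y_2^* z_2]$ by bounding each Kleene star by $n$; since the variables and attributes occurring in $p$ form a finite alphabet, the ``un-pumped skeleton'' of $r_n$ ranges over a finite set, whence there is an $n$ --- bounded by a quantity computable from $p$ --- at which $r_{n+1}$ is precisely the lasso extension of $r_n$, i.e.\ $\lasso{r_n}{r_{n+1}}=true$, and rule~(\ref{eq:k-loop-lasso}) fires, discharging the loop. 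Soundness of this case then reduces to Lemma~\ref{lm:reg-expr}: $\reg{r_n}{r_{n+1}}$ covers $r_n\exec q^m = r_{n+m}$ for all $m\geq1$, while the finitely many low-index unfoldings $r_0,\dots,r_n$ are covered because they arise from the patterns of $\reg{r_n}{r_{n+1}}$ with fewer repetitions of the pumped factors; hence $\bigcup_m r_m\subseteq\reg{r_n}{r_{n+1}}$, which is exactly the value the rule leaves in $\langle-\rangle_{\textnormal{al}}$.

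Assembling these cases gives the global claim: each top-level $k$-cell item is consumed by a sub-derivation that, by induction, terminates and preserves soundness, and the $k$-cell strictly shrinks once the governing construct is discharged, so the run from $\cellS{\emptyset}{\textnormal{al}}\cellS{p}{\textnormal{k}}$ is finite and ends in $\cellS{r}{\textnormal{al}}\cellS{.}{\textnormal{k}}$ with $\emptyset\exec p\subseteq r$. The step I expect to be the main obstacle is exactly ``the lasso is eventually reached'': making the pigeonhole argument rigorous requires pinning down the finite family of regular patterns that $\loopend{q}$ can generate and proving that the skeletons genuinely recur, and --- for nested loops and, more delicately, mutual recursion --- arranging this bottom-up so that the well-founded measure driving the structural induction does not become circular. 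A secondary point to check carefully is that the back-tracking bookkeeping of the full specification in Appendix~\ref{sec:calculus-complete-spec} indeed unions in the low-index unfoldings $r_0,\dots,r_n$, so that no pair produced by $\emptyset\exec p$ is silently dropped.
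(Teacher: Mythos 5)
Your proposal follows essentially the same route as the paper: regularity of the alias expressions (Lemma~\ref{lm:sequential-regularity}) guarantees that a lasso is eventually reached, which removes the loop/call construct from the $k$-cell and yields termination, while soundness of the resulting over-approximation is delegated to Lemma~\ref{lm:reg-expr}. The paper's own proof is only this high-level sketch, so the two points you flag as needing care --- making the ``a lasso is always reached'' pigeonhole argument rigorous, and checking that $\reg{r_n}{r_{n+1}}$ really covers the low-index unfoldings $r_0,\dots,r_{n-1}$ that the back-tracking rules discard rather than accumulate --- are both left implicit there, and your elaboration is strictly more detailed than what the paper provides.
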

\begin{proof}
The key observation is that, due to the execution of loops and/or recursive calls, expressions can infinitely grow in a \emph{regular} fashion. Hence, a lasso is always reached. Consequently, the control structure generating the infinite behaviour is removed from the $k$-cell, according to the associated {\K}-specification for loops and/or recursive calls. This guarantees termination. Moreover, recall that the regular expressions replacing the current alias information are a sound over-approximation, according to Lemma~\ref{lm:reg-expr}.
\qed
\end{proof}

Observe that the $RL$-based machinery can simulate precisely the ``cutting at length L'' technique in~\cite{Meyer-aliasing-13}. It suffices to disable the rules ($\clubsuit$) and stop the rewriting after L steps.

The naturalness of applying the resulted aliasing framework is illustrated in the example in Appendix~\ref{sec:example-k-machinery}, for the case of two mutually recursive functions.

\section{Integration in SCOOP}
\label{sec:alias-SCOOP}

In this section we provide a brief overview on the integration and applicability of the alias calculus in SCOOP~\cite{DBLP:conf/acsd/MorandiSNM13} -- a simple object-oriented programming model for concurrency.
Two main characteristics make SCOOP simple: 1) just one keyword programmers have to learn and use in order to enable concurrent executions, namely, {\it{separate}} and 2) the burden of orchestrating concurrent executions is handled within the model, therefore reducing the risk of correctness issues. 

In short, the key idea of SCOOP is to associate to each object a processor, or {\it handler} (that can be a CPU, or it can also be implemented in software, as a process or thread). Assume a processor $p$ that performs a call $o.f()$ on an object $o$. If $o$ is declared as ``separate'', then $p$ sends a request for executing $f()$ to $q$ -- the handler of $o$ (note that $p$ and $q$ can coincide). Meanwhile, $p$ can continue. Processors communicate via {\it channels}.

The Maude semantics of SCOOP in~\cite{DBLP:conf/acsd/MorandiSNM13} is defined over tuples of shape   
\[
\langle p_1 \,::\, St_{1} \mid \ldots \mid p_n \,::\, St_{n}, \sigma \rangle
\]
where, $p_i$ denotes a processor (for $i \in \{1, \ldots, n\}$), $St_i$ is the call stack of $p_i$ and $\sigma$ is the {\it state} of the system. States hold the information about the {\it heap} (which is a mapping of references to objects) and the {\it store} (which includes formal arguments, local variables, {\it etc}.).

The assignment instruction, for instance, is formally specified as the transition rule:
\begin{equation}
\label{eq:assign-Morandi}
\dfrac{\textnormal{a is fresh}}{
\Gamma \vdash \langle p \,::\, t\,:=s;\, St,\, \sigma \rangle \rightarrow
\langle p\,::\, \textnormal{eval}(a, s);\, \textnormal{wait}(a);\, \textnormal{write}(t, a.data);\,St,\, \sigma\rangle
}
\end{equation}
where, intuitively, ``eval$(a, s)$'' evaluates $s$ and puts the result on channel $a$, ``wait$(a)$'' enables processor $p$ to use the evaluation result, ``write$(t, a.data)$'' sets the value of $t$ to $a.data$, $St$ is a call stack, and $\Gamma$ is a typing environment~\cite{nienaltowski2007practical} containing the class hierarchy of a program and all the type definitions.

At this point it is easy to understand that the {\K}-rule for assignments
\[
\cellU{r}{(r_{1} - t)[t\,=\,(r_1 \slash s \,-\,t)] - ot}{al}\cellUL{t \assgn s}{.}{k}\,\,\,\,\,
\textnormal{with~} r_{1} = r[ot = t]~~~~(\ref{eq:k-assign})
\]
can be straightforwardly integrated in~(\ref{eq:assign-Morandi}) by enriching the state structure with a new field encapsulating the alias information, and considering instead the transition $\Gamma \vdash \langle p \,::\,\, t\, {:=} s;\, St,\, \sigma \rangle \rightarrow
\langle p\,::\, \textnormal{eval}(a, s);\, \textnormal{wait}(a);\, \textnormal{write}(t, a.data);\,St,\, \sigma'\rangle$
where 
\[
\begin{array}{lr}
\sigma.aliases = r ~~~~~~~~~
\sigma'.aliases =(r_{1} - t)[t\,=\,(r_1 \slash s \,-\,t)] - ot 
\end{array}
\]
with $r$ and $r_1$ as in~(\ref{eq:k-assign}).
The integration of all the {\K}-rules of the alias calculus on top of the Maude formalization of SCOOP can be achieved by following a similar approach.

For a case study, one can download the SCOOP formalization at:\\
{\url{https://dl.dropboxusercontent.com/u/1356725/SCOOP.zip}}\\
and run the command\\
{\small \verb+> maude SCOOP.maude ..\examples\aliasing-linked_list.maude+}\\
corresponding to the code in~(\ref{eq:code-loop-end}):
\[
\begin{array}{l}
x \,:= y;~~
\loopend{x \,:= x.next}.
\end{array}
\]
The console outputs the aliased expressions for a rewriting of depth $100$ which include, as expected, pairs of shape $[x,\, y.next^k]$. (The over-approximating mechanism for sequential settings is still to be implemented.)

As can be observed based on the code in {\small \verb+aliasing-linked_list.maude+}, in order to implement our applications in Maude, we use intermediate (still intuitive) representations.
For instance, the class structure defining a node in a simple linked list, with filed \emph{next} is declared as:
{
\small
\begin{verbatim}
class 'NODE
    create {'make}
    ( attribute { 'ANY } 'next : [?, . , 'NODE] ; )
   [...]
end ;
\end{verbatim}
}
\noindent
where {\small \verb+'next : [?, . , 'NODE]+} stands for an object of type NODE, that is handled by the current processor ({\small \verb+.+}) and that can be Void ({\small \verb+?+}), and {\small \verb+'make+} plays the role of a constructor. The intermediate representation of the instruction block in~(\ref{eq:code-loop-end}) is:
{
\small
\begin{verbatim}
assign ('x, 'y);
until False loop ( assign ('x, 'x . 'next(nil)) ; ) end ;
\end{verbatim}
}
We include in Appendix~\ref{app:Maude-aliasing} the whole class structure corresponding to~(\ref{eq:code-loop-end}), together with (the relevant parts of) the console output.
For a detailed description of SCOOP and its Maude formalization we refer the interested reader to the work in~\cite{DBLP:conf/acsd/MorandiSNM13}.

\subsection{Further applications of the alias calculus}
\label{sec:further-applications}

Apart from providing an alias analysis tool, the alias calculus can be exploited in order to build an abstract semantics of SCOOP. For example, an abstraction of the assignment rule~(\ref{eq:k-assign}) would omit the evaluation of the right-hand side of the assignment $t\,:=s$ and the associated message passing between channels:
\begin{equation}
\notag
\dfrac{\cdot}{
\Gamma \vdash \langle p \,::\, t\,:=s;\, St,\, \sigma \rangle \rightarrow
\langle p\,::\, \,St,\, \sigma'\rangle
}
\end{equation}
where 
\[
\begin{array}{lr}
\sigma.aliases = r & ~~~~~~~~
\sigma'.aliases =(r_{1} - t)[t\,=\,(r_1 \slash s \,-\,t)] - ot 
\end{array}
\]
with $r$ and $r_1$ as in~(\ref{eq:k-assign}).
This way one derives a simplified, reduced semantics of SCOOP, more appropriate for model checking, for instance; the current SCOOP formalization in Maude is often too large for this purpose.
A survey on abstracting techniques on top of Maude executable semantics is provided in~\cite{DBLP:conf/fct/MeseguerR11}.  

Furthermore, the aliasing information could be used for the so-called ``deadlocking'' problem, where two or more executing threads are each waiting for the other to finish. In the context of SCOOP, this is equivalent to identifying whether a set of processors reserve each other circularly ({\it i.e.}, there is a Coffman deadlock). This situation might occur, for instance, in a Dinning Philosophers scenario, where both philosophers and forks are objects residing on their own processors. The difficulty of identifying such deadlocks stems from the fact that SCOOP processors are known from object references, which \emph{may be aliased}.

\section{Conclusions}
\label{sec:conclusions}

In this paper we provide an extension of the alias calculus in~\cite{Meyer-aliasing-13} from finite alias relations to infinite ones corresponding to loops and recursive calls. Moreover, we devise an associated executable specification in the {\K} semantic framework~\cite{DBLP:journals/jlp/RosuS10}. In Theorem~\ref{th:dec-proc} we show that the RL-based machinery implements an algorithm that always terminates with a sound over-approximation of ``may aliasing'', in non-concurrent settings. This is achieved based on the sound (finitely representable) over-approximation of (``lasso shaped'') alias expressions in terms of regular expressions, as in Lemma~\ref{lm:reg-expr}. 
We also discuss the integration and applicability of the alias calculus on top of the Maude formalization of SCOOP~\cite{DBLP:conf/acsd/MorandiSNM13}. 

\medskip
An immediate direction for future work is to identify interesting (industrial) case studies to be analyzed using the framework developed in this paper.
We are also interested in devising heuristics comparing the efficiency and the precision ({\it e.g.}, the number of false positives introduced by the alias approximations) between our approach and other aliasing techniques. In this respect, we anticipate that the rewriting modulo associativity, together with the pattern matching capabilities of Maude will accelerate the identification of the ``lasso'' properties and the corresponding over-approximating regular alias expressions. This could eventually provide an effective reasoning apparatus for the ``may aliasing'' problem.

Another research direction is to derive alias-based abstractions for analyzing concurrent programs. We foresee possible connections with the work in~\cite{DBLP:conf/concur/HoareMSW09} on \emph{concurrent Kleene algebra} formalizing choice, iteration, sequential and concurrent composition of programs. The corresponding definitions exploit abstractions of programs in terms of traces of events that can depend on each other. Thus, obvious challenges in this respect include: (i) defining notions of dependence for all the program constructs in this paper, (ii) relating the concurrent Kleene operators to the semantics of the SCOOP concurrency model and (iii) checking whether fixed-points approximating the aliasing information can be identified via fixed-point theorems.

Furthermore, it would be worth investigating whether the graph-based model of alias relations introduced in~\cite{Meyer-aliasing-13} can be exploited in order to derive finite {\K} specifications of the extended alias calculus. In case of a positive answer, the general aim is to study whether this type of representation increases the speed of the reasoning mechanism, and why not -- its accuracy. With the same purpose, we refer to a possible integration with the technique in~\cite{DBLP:conf/pldi/ChaseWZ90} that handles point-to graphs via a stack-based algorithm for fixed-point computations.

We are also interested to what extent an abstract semantics based on aliases for SCOOP can be exploited for building more efficient analysis tools such as deadlock detectors, for instance. A survey on similar techniques that abstract away from possibly irrelevant information w.r.t. the problem under consideration is provided in~\cite{DBLP:conf/fct/MeseguerR11}.  

\paragraph{Acknowledgements}{
We are grateful for valuable comments to
M\u ariuca As\u avoae,   Alexander Kogtenkov, Jos\'e Meseguer, Bertrand Meyer, Benjamin Morandi and Sergey Velder.
The research leading to these results has received funding from the
European Research Council under the European Union's Seventh Framework
Programme (FP7/2007-2013) / ERC Grant agreement no. 291389.
}




\newpage
\appendix

\section{Regular expressions in sequential settings}
\label{sec:reg-expr}
In this section we provide the proof of Lemma~\ref{lm:sequential-regularity}; we proceed by demonstrating a series of intermediate results.

\begin{remark}
\label{rm:op-preserve-reg}
We first observe that the operations $r \slash s$, $r - x$, dot-completeness and $r[x = u]$ introduced in Section~\ref{sec:alias-calc} preserve the regularity of an alias relation $r$. 
\end{remark}

Then, we define a notion of \emph{finite execution} control blocks:
\begin{equation}
\label{eq:BNF-basic-struct-finite}
\begin{array}{r c l}
p & ::= & \create{x} \mid \forget{x} \mid t \assgn s\mid \\
& & p \,;\, p \mid \thenelse{p}{p} \mid\\
& & \call{f(l)} \mid x . \call{f(l)}
\end{array}
\end{equation}
where $f$ stands for a non-recursive function.

It is easy to see that the execution of control blocks as in~(\ref{eq:BNF-basic-struct-finite}) preserve the regularity of alias relations as well.

\begin{lemma}
\label{lm:reg-fin-exec}
For all regular alias relations $r$ and $p$ a finite-execution control block, in a sequential setting, it holds that $r \exec p$ is also regular.
\end{lemma}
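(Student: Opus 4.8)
The plan is to proceed by structural induction on the finite-execution control block $p$, as defined by the grammar in~(\ref{eq:BNF-basic-struct-finite}), with the induction hypothesis that for \emph{every} regular alias relation $r$, the relation $r \exec p$ is regular. First I would dispatch the atomic cases. For $\create{x}$ and $\forget{x}$, by~(\ref{eq:def-create-delete}) we have $r \exec p = r - x$, which is regular by Remark~\ref{rm:op-preserve-reg} (the operation $r - x$ merely discards pairs). For $t \assgn s$, by~(\ref{eq:def-assign}) the result is built from $r$ by finitely many applications of $r[ox = t]$, dot-completeness, $r - x$, and $r \slash s$; each of these preserves regularity by Remark~\ref{rm:op-preserve-reg}, so the composite is regular. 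Note that the constant expression $ot$ is itself a (fresh) local variable, hence regular, so introducing it does not break regularity.

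Next come the two compound control-flow cases, both of which are purely combinatorial given the induction hypothesis. For sequential composition $p_1 ; p_2$, by~(\ref{eq:def-seq-comp}) we have $r \exec (p_1 ; p_2) = (r \exec p_1) \exec p_2$; the inner term is regular by the IH applied to $p_1$, and then the outer term is regular by the IH applied to $p_2$ (this is exactly why the IH is stated for all regular $r$). For $\thenelse{p_1}{p_2}$, by~(\ref{eq:def-then-else}) the result is the union $r \exec p_1 \,\cup\, r \exec p_2$ of two regular relations, and a finite union of regular relations is regular since regularity is a pointwise property of the pairs.

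The genuinely delicate cases are the two function-call constructs, $\call{f(l)}$ and $x.\call{f(l)}$, where $f$ is non-recursive. Here I would argue by a secondary induction on the \emph{call depth} of $f$, which is finite precisely because $f$ is non-recursive (its body invokes only functions of strictly smaller call depth). For the unqualified call, by~(\ref{eq:def-qualified-call}) we have $r \exec \call{f(l)} = (r[f^\bullet : l]) \exec |f|$; the substitution $r[f^\bullet : l]$ replaces list elements by list elements and so preserves regularity, and $|f|$ is a finite-execution control block whose calls all have smaller depth, so the secondary IH gives regularity of the result. For the qualified call $x.\call{f(l)}$, by~(\ref{eq:def-qualified-call}) the result is $x.((x'.r) \exec \call{f(x'.l)})$; the distribution of ``$.$'' over relations sends a regular relation to a regular relation (prefixing each expression in each pair by the regular expression $x$, resp.\ the negative variable $x'$), the inner computation reduces to the unqualified case just handled, and so the whole expression is regular. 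The main obstacle I anticipate is making the secondary induction on call depth rigorous: one must be careful that the grammar~(\ref{eq:BNF-basic-struct-finite}) permits nested calls, that ``non-recursive'' is interpreted so as to guarantee well-foundedness of the call-depth ordering, and that the negative variables $x'$ and the substitutions they induce are themselves regular expressions — this bookkeeping, rather than any conceptual difficulty, is where the proof needs the most care.
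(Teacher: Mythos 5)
Your proof is correct and follows essentially the same route as the paper: structural induction on $p$, with the atomic cases discharged by Remark~\ref{rm:op-preserve-reg} and the function-call cases handled by unfolding the definitions in~(\ref{eq:def-qualified-call}). The only addition is your explicit secondary induction on call depth to justify that the unfolding of non-recursive calls terminates, which the paper leaves implicit but which is a faithful elaboration rather than a different argument.
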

\begin{proof}
The proof follows immediately, by induction on the structure of $p$ and Remark~\ref{rm:op-preserve-reg}.
Base cases are: $\create{x}$, $\forget{x}$ and $t \assgn s$. For function calls, the result is a consequence of their corresponding unfolding, based on the definitions in~(\ref{eq:def-qualified-call}).
\end{proof}

\begin{remark}
\label{rm:rec-calls-loop}
W.r.t. may aliasing, recursive calls can be handled via loops.
Consider, for instance the recursive function
\[
f(x)\,\, \{\, B_1;\,\, f(y);\,\, B_2 \,\}
\]
where $B_1$ and $B_2$ are instruction blocks built as in~(\ref{eq:BNF-control-struct}). It is intuitive to see that computing the may aliases resulted from the execution of $f(x)$ reduces executing unfoldings of shape:
\[
\loopend{B_1};\,\,\loopend{B_2}.
\] 
\end{remark}

Moreover, unbounded program executions also preserve regularity.
\begin{lemma}
\label{lm:reg-infin-exec}
For all regular alias relations $r$ and $p$ a control block that can execute unboundedly, in a sequential setting, it holds that $r \exec p$ is also regular.
\end{lemma}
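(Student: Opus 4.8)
The plan is to reduce the unbounded case to the finite-execution case already handled in Lemma~\ref{lm:reg-fin-exec}, together with the observation in Remark~\ref{rm:rec-calls-loop} that recursive calls can be simulated by loops. Thus it suffices to treat a single construct, $\loopend{q}$, where $q$ is itself built according to~(\ref{eq:BNF-control-struct}); the general statement then follows by structural induction on $p$, with sequential composition and conditionals closed under regularity by Remark~\ref{rm:op-preserve-reg} (union of regular relations is regular, since a finite union of pairs of regular expressions is again a set of pairs of regular expressions).

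For the loop case itself, recall that by~(\ref{eq:def-loop}) we have $r \exec \loopend{q} = \bigcup_{n\in\mathds{N}} (r \exec q^n)$. First I would argue that this infinite union is, up to regularity, generated by finitely many ``seed'' relations: by Lemma~\ref{lm:reg-fin-exec} each finite iterate $r \exec q^n$ is regular, and as $n$ grows the alias expressions can only grow by repeating tails of their prefixes — this is precisely the content of the $\lasso{\cdot}{\cdot}$ predicate of~(\ref{eq:def-lasso}). So there exists some $N$ (depending on the bounded number of variables and the bounded ``shape'' of $q$) such that the lasso property $\lasso{r \exec q^{N}}{r \exec q^{N+1}}$ holds. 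By Lemma~\ref{lm:reg-expr}, every further iterate $r \exec q^{N+m}$ with $m \geq 1$ is then contained in $\reg{r\exec q^N}{r \exec q^{N+1}}$, which by construction~(\ref{eq:def-reg}) is a relation of regular expressions (each component $x_i y_i^* z_i$ is regular since $x_i, y_i, z_i$ are regular and $(-)^*$ and concatenation preserve regularity). Hence $r \exec \loopend{q}$ is the union of the finitely many regular relations $r \exec q^0, \dots, r \exec q^{N}$ together with the single regular relation $\reg{r\exec q^N}{r \exec q^{N+1}}$, and is therefore regular.

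The main obstacle is establishing that a lasso is in fact reached after finitely many iterations — i.e. the existence of the bound $N$ above. This requires a finiteness/pigeonhole argument: one must observe that, because the program text $q$ is finite and mentions only finitely many variables and attributes, the set of ``prefix skeletons'' occurring in the relations $r \exec q^n$ is drawn from a finite alphabet, so the sequence of iterates cannot keep producing genuinely new, non-pumpable patterns forever; some pair of consecutive (or near-consecutive) iterates must then stand in the lasso relation. I would make this precise by tracking, for each iterate, the finite multiset of ``growth increments'' introduced at each assignment inside $q$, showing these stabilize, and concluding that beyond some point each new iterate is obtained from the previous one exactly by the tail-repetition of~(\ref{eq:def-lasso}). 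Once that combinatorial claim is in hand, the rest is the bookkeeping sketched above. (This is, incidentally, the same mechanism that underlies termination of the $\mathds{K}$-machinery in Theorem~\ref{th:dec-proc}, so the argument can be cross-referenced there.)
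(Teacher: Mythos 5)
Your proposal reaches the right conclusion but takes a much heavier route than the paper, and the extra machinery it leans on is exactly where the gap sits. The paper's proof is a short induction on the number of nested loops in $p$ (using Remark~\ref{rm:rec-calls-loop} to fold recursive calls into loops): for $\loopend{q}$, equation~(\ref{eq:def-loop}) gives $r \exec \loopend{q} = \bigcup_{n\in\mathds{N}} (r \exec q^n)$, each finite iterate $r \exec q^n$ is regular by the induction hypothesis together with Lemma~\ref{lm:reg-fin-exec}, and the (possibly infinite) union is then regular \emph{immediately}, because the paper's notion of a regular relation only asks that every pair consist of regular expressions --- and every finite concatenation of variables and attributes is already a regular expression by definition. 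The statement requires no finiteness or finite representability of the relation, so nothing more is needed.

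Your proof instead tries to collapse the infinite union into a finite one via the lasso/$\reg{-}{-}$ apparatus of Lemma~\ref{lm:reg-expr}. Two problems. First, the step you yourself flag as ``the main obstacle'' --- the existence of some $N$ with $\lasso{r\exec q^N}{r\exec q^{N+1}}$ --- is a genuinely nontrivial combinatorial claim that you only sketch; the paper merely asserts it (inside the proof of Theorem~\ref{th:dec-proc}) and never proves it, so it cannot be cross-referenced away, and the pigeonhole outline is not yet a proof. Second, even granting such an $N$, your closing identity is not an identity: Lemma~\ref{lm:reg-expr} gives only the containment $r\exec q^{N+m} \in \reg{r\exec q^N}{r\exec q^{N+1}}$, so the finite union together with $\reg{-}{-}$ \emph{over-approximates} $r\exec\loopend{q}$ rather than equalling it (the $\reg{-}{-}$ pairs carry Kleene stars, while the iterates carry concrete unrollings). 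Since a subrelation of a regular relation is still regular, that slip is repairable, but the net effect is that you are proving a strictly stronger statement (finite representability, which is the content of Theorem~\ref{th:dec-proc}) at the cost of an unproved combinatorial lemma, when the statement actually at hand follows from the definitions with no lasso argument at all.
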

\begin{proof}
The proof follows by induction on the number of nested loops in $p$ and Remark~\ref{rm:rec-calls-loop}.
\end{proof}

Then, the result in Lemma~\ref{lm:sequential-regularity} follows immediately by Lemma~\ref{lm:reg-fin-exec} and Lemma~\ref{lm:reg-infin-exec}.

\section{Sound over-approximations}
\label{sec:soundness}

In what follows we provide the proof of Lemma~\ref{lm:reg-expr}:\\
\emph{
Consider $r$ and $r'$ two alias relations, and $p$ a basic control block. If $r\exec p = r'$ and $\lasso{r}{r'} = true$, then the following holds for all $n \geq 1$:
\[
r \exec p^{n} \in \reg{r}{r'}.
\]
}

\begin{proof}
We proceed by induction on $n$.
\begin{itemize}
\item {\it Base case}: $n=1$. By hypothesis it holds that $\lasso{r}{r'} = true$. Hence, according to the definition of $\lasso{-}{-}$ in~(\ref{eq:def-lasso}), there exists a one-to-one correspondence of the shape
\[
[x_1 y_1 z_1, x_2 y_2 z_2] \in r \textnormal{~~iff~~} [x_1 y_1 y_1 z_1, x_2 y_2 y_2 z_2] \in r'
\]
between the elements of $r$ and $r'$, respectively.

Consequently, by the definition of $\reg{-}{-}$ in~(\ref{eq:def-reg}), it is easy to see that 
\[r' \in \reg{r}{r'}.\]

\item {\it Induction step.} Fix a natural number $n$ and suppose that
\begin{equation}
\label{eq:ind-step}
r \exec p^k \in \reg{r}{r'}
\end{equation}
for all $k \in \{1, \ldots, n\}$. We want to prove that~(\ref{eq:ind-step}) holds also for $k = n+1$.

We continue by ``reductio ad absurdum''. Consider 
\[
\overline{r} = r \exec p^n \in \reg{r}{r'},
\]
and assume that
\begin{equation}
\label{eq:red-absurdum}
\overline{r} \exec p \not \in \reg{r}{r'}
\end{equation}
Clearly, the execution of $p$ when starting with $\overline{r}$ identifies an alias pair which is not in $\reg{r}{r'}$. Given that $p$ is a basic control block as in~(\ref{eq:BNF-basic-struct}), and based on the corresponding definitions in~(\ref{eq:def-seq-comp})--(\ref{eq:def-assign}), it is not difficult to observe that the regular structure of the alias information can only be broken via a new added pair $(t, s')$ associated to an assignment $t \assgn s$ within $p$.

Let $p = C[t \assgn s]$, where $C$ is a context built according to~(\ref{eq:BNF-basic-struct}), and $t \assgn s$ is the upper-most assignment instruction in the syntactic tree associated to $p$, that introduces a pair $[t,s']$ which is not in $\reg{r}{r'}$. Assume that $\tilde{r}$ is the intermediate alias relation obtained by reducing $\overline{r} \exec C[t\assgn s]$ according to the equations~(\ref{eq:def-seq-comp})--(\ref{eq:def-assign}), before the application of the assignment axiom corresponding to $t\assgn s$.

Note that $t \assgn s$ was executed at least once before, as $n \geq 1$, and observe that $\tilde{r} \in \reg{r}{r'}$. Hence, we identify two situations in the context of the aforementioned execution: (a) either all the newly added pairs corresponding to the assignment $t \assgn s$ complied to the regular structure, or (b) each new pair $[t', s']$ that did not fit the regular pattern was later removed via a subsequent instruction ``$\create{u}$'' or ``$\forget{u}$'' within $p$, with $u$ a prefix of $t'$ or $s'$.

If the case (a) above was satisfied, then, based on the definition of dot-completeness, a pair
\[
(t,s') \in (\tilde{r_1} - t)[t = \tilde{r_1}\slash s - t] - ot,
\]
where 
\[
\tilde{r_1} = \tilde{r}[ot = t]
\]
cannot break the regular pattern of the alias expressions either.
For the case (b) above, all the ``non-well-behaved'' new pairs will be again removed via a subsequent ``$\create{u}$'' or ``$\forget{u}$'' within $p$.

Therefore, the assumption in~(\ref{eq:red-absurdum}) is false, so it holds that:
\[
\overline{r} \exec p = r \exec p^{n+1} \in \reg{r}{r'}.
\]
\end{itemize}
\qed
\end{proof}

\section{Alias calculus in {\K} -- complete specification}
\label{sec:calculus-complete-spec}

In this section we provide the full specification of the alias calculus in {\K}.
Recall that, as a convention, we mark with ($\clubsuit$) the rules that are sound only for non-concurrent contexts, based on Lemma~\ref{lm:reg-expr}.

The following {\K}-rules are straightforward, based on the axioms~(\ref{eq:def-seq-comp})--(\ref{eq:def-assign}) in Section~\ref{sec:ext-inf-expr}. 
Namely, the rule implementing an instruction $p \,;\, q$ simply forces the sequential execution of $p$ and $q$ by positioning $p \curvearrowright q$ at the top of the continuation cell:

\begin{equation}
\label{appeq:k-seq-comp}
\cellUL{p\,;\,q}{p \curvearrowright q }{k}
\end{equation}

Handling $\create x$ and $\forget x$ complies to the associated definitions. Namely, it updates the current alias relation by removing all the pairs having (at least) one element with $x$ as prefix. In addition, it also pops the corresponding instruction from the continuation stack:

\begin{equation}
\label{appeq:k-create-forget}
\begin{array}{lr}
\cellU{r}{r-x}{al}\!\!\cellUL{\create x}{.}{k} \hspace{15pt}
&
\cellU{r}{r-x}{al}\!\!\cellUL{\forget x}{.}{k}
\end{array}
\end{equation}

The assignment rule restores the current alias relation according to its axiom in~(\ref{eq:def-assign}), and removes the assignment instruction from the top of the $k$-cell:

\begin{equation}
\label{appeq:k-assign}
\cellU{r}{(r_{1} - t)[t\,=\,(r_1 \slash s \,-\,t)] - ot}{al}\cellUL{t \assgn s}{.}{k}\,\,\,\,\,
\textnormal{with~} r_{1} = r[ot = t]
\end{equation}

The {\K}-implementation of a $\thenelse{p}{q}$ statement is more sophisticated, as it instruments a stack-based mechanism enabling the computation of the union of alias relations $r \exec p \,\cup\, r\exec q$ in three steps. First, we define the {\K}-rule:
\begin{equation}
\label{appeq:k-then-else-first}
\cellS{r}{al}
\cellUL{\thenelse{p}{q}}{p \mfbox{et} q \mfbox{ee}}{k}
\cellUL{.}{\cellS{r,p}{t}\,\, \cellS{r,q}{e}}{bkt-te}
\end{equation}
saving at the top of the back-tracking stack $\langle - \rangle_{\textnormal{bkt-te}}$ the initial alias relation $r$ to be modified by both $p$ and $q$, via two cells $\langle r, p\rangle_{\textnormal{t}}$ and $\langle r, q\rangle_{\textnormal{e}}$, respectively. Note that the original instruction in the $k$-cell is replaced by a meta-construction marking the end of the executions corresponding to the {\bf then} and {\bf else} branches with $\mfbox{et}$ and $\mfbox{ee}$, respectively. 

Second, whenever the successful execution of $p$ (signaled by $\mfbox{et}$) at the top of the $k$-cell) builds an alias relation $r'$, the execution of $q$ starting with the original relation $r$ is forced by replacing $r'$ with $r$ in $\langle - \rangle_{\textnormal{al}}$, and by positioning $q \mfbox{ee}$ at the top of the $k$-cell. The new alias information after $p$, denoted by $\langle r', p\rangle_{\textnormal{t}}$, is updated in the back-tracking cell: 

\begin{equation}
\label{appeq:k-end-then}
\cellU{r'}{r}{al}
\cellUL{\mfbox{et} q \mfbox{ee}}{q \mfbox{ee}}{k}
{
\begin{tabular}{r@{}c@{}c@{}l}
\small
$\langle\,\,$ & $\cellS{r,p}{t}$ & $\cellS{r,q}{e}$ & $ \,\,\ldots\rangle_{\textnormal{bkt-te}}$\\
\cline{2-2}
& $\cellS{r',p}{t}$ & &\\
\end{tabular}
}
\end{equation}

Eventually, if the successful execution of $q$ (marked by $\mfbox{ee}$ at the top of $\langle-\rangle_{\textnormal k}$) produces an alias relation $r''$, then the final alias information becomes $r' \,\cup\, r''$, where $r'$ is the aliasing after $p$, stored as showed in~(\ref{appeq:k-end-then}). The corresponding back-tracking information is removed from  $\langle - \rangle_{\textnormal{bkt-te}}$, and the next program instruction is enabled in the $k$-cell:

\begin{equation}
\label{appeq:k-end-else}
\cellU{r''}{r'\,\cup\, r''}{al}
\cellUL{\mfbox{ee}}{.}{k}
\cellUL{\cellS{r',p}{t}\,\,\cellS{r,q}{e}}{.}{bkt-te}
\end{equation}

For $\loopend{p}$, we utilize a meta-construction $p \mfbox{l} \loopend{p}$ simulating the set union in~(\ref{eq:def-loop}), and a back-tracking stack $\langle - \rangle_{\textnormal{bkt-l}}$ collecting the alias information obtained after each execution of $p$. Moreover, the {\K}-implementation exploits the result in Lemma~\ref{lm:reg-expr}.
Whenever a ``lasso'' is reached, the infinite rewriting is prevented by resuming the infinite application of $p$ in terms of a sound over-approximating alias relation. The {\K}-rules are as follows.

First, the aforementioned unfolding is performed, and the alias relation before $p$ is stored in the back-tracking cell as $\langle r \rangle_{\textnormal{al-o}} \langle p \rangle_{\textnormal{l}}$:

\begin{equation}
\label{appeq:k-loop-whole}
\cellS{r}{al}
\cellUL{\loopend{p}}{p \mfbox{l} \loopend{p}}{k}
\cellUL{.}{\cellS{r}{al-o}\cellS{p}{l}}{bkt-l}
\end{equation}

If the alias relation $r'$ obtained after the successful execution of $p$ (marked by $\mfbox{l}$ at the top of the continuation) is not a lasso of the aliasing $r$ before $p$ (previously stored in $\langle -\rangle_{\textnormal{bkt-l}}$) then $p$ is constrained to a new execution by becoming the top of the $k$-cell, and $r'$ is memorized for back-tracking:

\begin{equation}
\label{appeq:k-loop-not-lasso}
\cellS{r'}{al}
\cellUL{\mfbox{l} \loopend{p}}{p \mfbox{l} \loopend{p}}{k}
\cellUL{\cellS{r}{al-o}\cellS{p}{l}}{\cellS{r'}{al-o}\cellS{p}{l}}{bkt-l}
\textnormal{~if not~} \lasso{r}{r'}~~(\clubsuit)
\end{equation}

Last, if a lasso is reached after the execution of $p$, then the current aliasing is soundly replaced by a ``regular'' over-approximation $\reg{r}{r'}$, the corresponding back-tracking information is removed from $\langle -\rangle_{\textnormal{bkt-l}}$ and the {\bf loop} instruction is eliminated from the $k$-cell:

\begin{equation}
\label{appeq:k-loop-lasso}
\cellU{r'}{\reg{r}{r'}}{al}\!\!
\cellUL{\mfbox{l} \loopend{p}}{.}{k}\!\!
\cellUL{\cellS{r}{al-o}\cellS{p}{l}}{.}{bkt-l}
\textnormal{~if~} \lasso{r}{r'}~~(\clubsuit)
\end{equation}

For handling function calls such as $\call{f(l)}$ we use a meta-construction $\mid f \mid \!\! \mfbox{f}$. Here $\mid f \mid$ stands for the body of $f$ and $\mfbox{f}$ marks the end of the corresponding execution. Moreover, a stack $\langle - \rangle_{\textnormal{bkt-cf}}$ is utilized in order to store the alias information before each (possibly recursive) call of $f$, with the purpose of identifying the  lassos generated by the (possibly repeated) execution of $f$.
In order to guarantee a sound implementation of (mutually) recursive calls,
both $\mfbox{f}$ and  $\langle - \rangle_{\textnormal{bkt-cf}}$ are parameterized by $f$ -- the name of the function.
An example illustrating this reasoning mechanism is provided in Appendix~\ref{sec:example-k-machinery}.

The first {\K} rule for handling function calls matches the associated axiom in~(\ref{eq:def-qualified-call}): the alias information is set to $r[f^{\bullet}:l]$, whereas the next instructions to be executed are given by $\mid f \mid$. Note that the original aliasing is retained in the (initially empty) back-tracking cell via $\langle r \rangle_{\textnormal{al-o}}$.

\begin{equation}
\label{appeq:k-call-first}
\cellU{r}{r[f^\bullet:l]}{al}
\cellUL{\call{f(l)}}{\mid f \mid \mfbox{f}}{k}
\cellU{.}{\cellS{r}{al-o}}{bkt-cf}
\end{equation}

\begin{remark}
\label{apprem:formal-vs-actual}
Observe that the back-tracking cell does not need to be parameterized by the actual argument list $l$ of $f$. Each such argument is anyways replaced in the current alias relation $r$ by its counterpart in the formal argument list of $f$. In short: $r$ becomes $r[f^\bullet : l]$.
\end{remark}

A successful execution of $\call{f(l)}$ is distinguished by the occurrence of $\mfbox{f}$ at the top of the continuation stack. If this is the case, then the corresponding back-tracking alias information is removed from $\langle - \rangle_{\textnormal{bkt-cf}}$ and the next program instruction (if any) is enabled at the top of the $k$-cell:

\begin{equation}
\label{appeq:k-call-exit}
\cellS{r'}{al}
\cellUL{\mfbox{f}}{.}{k}
\cellUL{\cellS{r}{al-o}}{.}{bkt-cf}
\end{equation}

Recursive calls are treated by means of two {\K}-rules. Note that a recursive context is identified whenever the current program instruction is of shape $\call{f(l)}$ and the associated back-tracking structure is not empty, {\it i.e.}, rule~(\ref{appeq:k-call-first}) was previously applied.
Then, if the recursive call of $f$ when starting with $r$ produces a lasso $r'$, the execution of $f(l)$ is stopped by soundly over-approximating the alias information with $\reg{r}{r'}$, according to Lemma~\ref{lm:reg-expr}, and by removing $\call{f(l)}$ from the $k$-cell:

\begin{equation}
\label{appeq:k-call-lasso}
\cellU{r'}{\reg{r}{r'}}{al}
\cellUL{\call{f(l)}}{.}{k}
\cellSL{\cellS{r}{al-o}}{bkt-cf}
\textnormal{~if~} \lasso{r}{r'}~~(\clubsuit)
\end{equation}

If a lasso is not reached, then the body of $f$ is executed once more, and the current aliasing is pushed to the back-tracking cell:

\begin{equation}
\label{appeq:k-call-not-lasso}
\cellS{r'}{al}
\cellUL{\call{f(l)}}{\mid f \mid \mfbox{f}}{k}
{
\begin{tabular}{r@{}c@{}c@{}l}
\small
$\langle\,\,$ & $.$ & $\cellS{r}{al-o}$ & $ \,\,\ldots\rangle_{\textnormal{bkt-cf}}$\\
\cline{2-2}
& $\cellS{r'}{al-o}$ & &\\
\end{tabular}
}
\textnormal{~if~not~} \lasso{r}{r'}~~(\clubsuit)
\end{equation}

Qualified calls $x.\call{f(l)}$ are handled by two {\K}-rules as follows. First, based on the definition in~(\ref{eq:def-qualified-call}), the ``negative variable'' $x'$ transposing the context of the call to to the context of the caller is distributed to the elements of the initial alias relation $r$, and to $l$ -- the argument list of $f$. Moreover, a meta-construction $\mfbox{qf}$ is utilized in order to mark the end of the qualified call in the continuation cell, similarly to the rule~(\ref{appeq:k-call-first}). The caller is stored in a back-tracking stack $\cellS{.}{\textnormal{bkt-qf}}$ also parameterized by $f$ -- the name of the function. The current instruction in the $k$-cell becomes $\call{f(x'.l)}$, as expected:

\begin{equation}
\label{appeq:k-qcall-first}
\cellU{r}{x'.r}{al}
\cellUL{x. \call{f(l)}}{\call{f(x'.l)} \mfbox{qf}}{k}
\cellU{.}{\cellS{x}{f}}{bkt-qf}
\end{equation}

Second, when the successful termination of the qualified call is signaled by $\mfbox{qf}$ at the top of the $k$-cell, the corresponding stored caller is distributed to the current alias relation and removed from the back-tracking cell. The next instruction in the continuation cell is released by eliminating the top $\mfbox{qf}$: 

\begin{equation}
\label{appeq:k-qcall-exit}
\cellU{r}{x.r}{al}
\cellUL{\mfbox{qf}}{.}{k}
\cellUL{\cellS{x}{f}}{.}{bkt-qf}
\end{equation}

\section{The {\K}-machinery by example}
\label{sec:example-k-machinery}

For an example, in this section we show how the {\K}-machinery developed in Section~\ref{sec:implem-k} can be used in order to extract the alias information for the case of two mutually recursive functions defined as:
\[
\begin{array}{lr}
\begin{array}{l}
f(x) \,\, \{
\,\, x \assgn x.a\,; 
~~\call{g(x)} \,\, \}
\end{array}
&\hspace{40pt}
\begin{array}{l}
g(x) \,\, \{
\,\, x \assgn x.b\,;
~~\call{f(x) \,\, \}}
\end{array}
\end{array}
\]
We assume that $x$ is an object of a class with two fields $a$ and $b$, respectively. We consider a sequential setting.

At first glance it is easy to see that the execution of $\call{f(x)}$, when starting with an empty alias relation $r$, produces the alias expressions:
\begin{equation}
\label{eq:ex-mut-rec-alias}
[x,\,x.(a.b)^*]~~~[x.a,\,x.(a.b)^*.a]~~~[x.b,\,x.(a.b)^*.b]
\end{equation}

The associated reasoning in {\K} is depicted in the figure below. The whole procedure starts with an empty alias relation $r = \emptyset$, and $\call{f(x)}$ in the continuation stack. Then, the corresponding {\K} rules (for handling assignments and function calls) are applied in the natural way.

A lasso is reached after two calls of $f(x)$ that, consequently, determine two calls of $g(x)$ -- identified by $\mfbox{g} \mfbox{f} \mfbox{g} \mfbox{f}$ in the $k$-cell. This triggers the application of rule~(\ref{appeq:k-call-lasso})  enabling the ``regular'' over-approximation as in Lemma~\ref{lm:reg-expr}.

Our example also illustrates the importance of isolating the back-traced alias information in cells of shape $\cellS{.}{\textnormal{bkt-cf}}$ parameterized by the (possibly recursive) function $f$. More explicitly, rule~(\ref{appeq:k-call-lasso}) is soundly applied by identifying the aforementioned lasso based on: the current alias relation $r_4$, the recursive call $f(l)$ at the top of the continuation, and the back-traced aliasing $\cellSL{\cellS{r_2}{\textnormal{al-o}}}{\textnormal{bkt-cf}}$ associated to the previous executions of $f(l)$.

\renewcommand{\arraystretch}{1.5}
\setlength{\tabcolsep}{2pt}

As introduced in~(\ref{eq:def-lasso}), an alias relation $r'$ is a lasso of a relation $r$ whenever there is a one-to-one correspondence between their elements as follows:
\[
[x_1 y_1 z_1, x_2 y_2 z_2] \in r \textnormal{~~iff~~} [x_1 y_1 y_1 z_1, x_2 y_2 y_2 z_2] \in r'.
\]
The current alias relation
\[
r_4 = \{[x, x.a.b.a.b],\, [x.a, x.a.b.a.b.a],\, [x.b, x.a.b.a.b.b]\},
\]
before applying rule~(\ref{appeq:k-call-lasso}), is a lasso of
\[
r_2 = \{[x, x.a.b],\, [x.a, x.a.b.a],\, [x.b, x.a.b.b]\}.
\]
The aforementioned one-to-one correspondence is summarized in the following table:
\[
\footnotesize
\begin{tabular}{r@{}c@{}l|c|c|c|c|c|c}
$[x_1 y_1 z_1, x_2 y_2 z_2] \in r_2$& {~iff~} & $ [x_1 y_1 y_1 z_1, x_2 y_2 y_2 z_2] \in r_4$ & $x_1$ & $y_1$ & $z_1$ & $x_2$ & $y_2$ & $z_2$\\
\hline
$[x, x.a.b] \in r_2$ & {~iff~} & $[x, x.a.b.a.b] \in r_4$ & $x$ & $\varepsilon$ & $\varepsilon$ & $x$ & $a.b$ & $\varepsilon$\\
\hline
$[x.a, x.a.b.a] \in r_2$ & {~iff~} & $[x.a, x.a.b.a.b.a] \in r_4$ & $x$ & $\varepsilon$ & $a$ & $x$ & $a.b$ & $a$\\
\hline
$[x.b, x.a.b.b] \in r_2$ & {~iff~} & $[x.b, x.a.b.a.b.b] \in r_4$ & $x$ & $\varepsilon$ & $b$ & $x$ & $a.b$ & $b$
\end{tabular}
\]
Here $\varepsilon$ stands for the \emph{empty alias expression}.

Moreover, according to rule~(\ref{appeq:k-call-lasso}), the lasso shaped by $r_2$ and $r_4$ also causes the (otherwise infinite) recursive calls to stop, as $\call{f(l)}$ is eliminated from the top of the $k$-cell.
Hence, the rewriting process finishes with
a sound over-approximation $\reg{r_2}{r_4}$ replacing the current alias relation (cf. Lemma~\ref{lm:reg-expr}), defined precisely as in~(\ref{eq:ex-mut-rec-alias}). 

\renewcommand{\arraystretch}{1}

\begin{figure}[!htbp]
\[
\begin{tabular}{rl}
\label{fig:ex-mut-rec-k}
$\cellS{r}{\textnormal{al}}$ & $\cellS{\call{f(x)}}{\textnormal{k}}$\\
$\cellS{.}{\textnormal{bkt-cf}}$ & $\cellS{.}{\textnormal{bkt-cg}}$
\\[1.5ex]
\multicolumn{2}{c}{$\Downarrow$~(\ref{appeq:k-call-first})}\\[1.5ex]
$\cellS{r}{\textnormal{al}}$ & $\cellS{x \assgn x.a;\, \call{g(x)} \mfbox{f}}{\textnormal{k}}$\\
$\cellS{\cellS{r}{\textnormal{al-o}}}{\textnormal{bkt-cf}}$ & $\cellS{.}{\textnormal{bkt-cg}}$
\\[1.5ex]
\multicolumn{2}{c}{$\Downarrow$~(\ref{appeq:k-assign})}\\[1.5ex]
$\cellS{r_1}{\textnormal{al}}$ & $\cellS{\call{g(x)} \mfbox{f}}{\textnormal{k}}$\\
$\cellS{\cellS{r}{\textnormal{al-o}}}{\textnormal{bkt-cf}}$ & $\cellS{.}{\textnormal{bkt-cg}}$\\
\multicolumn{2}{c}{where $r_1 = \{[x, x.a],\, [x.a, x.a.a],\, [x.b, x.a.b]\}$}
\\[1.5ex]
\multicolumn{2}{c}{$\Downarrow$~(\ref{appeq:k-call-not-lasso})}\\[1.5ex]
$\cellS{r_1}{\textnormal{al}}$ & $\cellS{x \assgn x.b;\, \call{f(x)} \mfbox{g} \mfbox{f}}{\textnormal{k}}$\\
$\cellS{\cellS{r}{\textnormal{al-o}}}{\textnormal{bkt-cf}}$ & $\cellS{\cellS{r_1}{\textnormal{al-o}}}{\textnormal{bkt-cg}}$
\\[1.5ex]
\multicolumn{2}{c}{$\Downarrow$~(\ref{appeq:k-assign})}\\[1.5ex]
$\cellS{r_2}{\textnormal{al}}$ & $\cellS{\call{f(x)} \mfbox{g} \mfbox{f}}{\textnormal{k}}$\\
$\cellS{\cellS{r}{\textnormal{al-o}}}{\textnormal{bkt-cf}}$ & $\cellS{\cellS{r_1}{\textnormal{al-o}}}{\textnormal{bkt-cg}}$\\
\multicolumn{2}{c}{where $r_2 = \{[x, x.a.b],\, [x.a, x.a.b.a],\, [x.b, x.a.b.b]\}$}
\\[1.5ex]
\multicolumn{2}{c}{$\Downarrow$~(\ref{appeq:k-call-not-lasso})}\\[1.5ex]
$\cellS{r_2}{\textnormal{al}}$ & $\cellS{ x \assgn x.a;\,\call{g(x)} \mfbox{f} \mfbox{g} \mfbox{f}}{\textnormal{k}}$\\
$\cellS{\cellS{r_2}{\textnormal{al-o}}\,\, \cellS{r}{\textnormal{al-o}}}{\textnormal{bkt-cf}}$ & $\cellS{\cellS{r_1}{\textnormal{al-o}}}{\textnormal{bkt-cg}}$
\\[1.5ex]
\multicolumn{2}{c}{$\Downarrow$~(\ref{appeq:k-assign})}\\[1.5ex]
$\cellS{r_3}{\textnormal{al}}$ & $\cellS{ \call{g(x)} \mfbox{f} \mfbox{g} \mfbox{f}}{\textnormal{k}}$\\
$\cellS{\cellS{r_2}{\textnormal{al-o}}\,\, \cellS{r}{\textnormal{al-o}}}{\textnormal{bkt-cf}}$ & $\cellS{\cellS{r_1}{\textnormal{al-o}}}{\textnormal{bkt-cg}}$\\
\multicolumn{2}{c}{where $r_3 = \{[x, x.a.b.a],\, [x.a, x.a.b.a.a],\, [x.b, x.a.b.a.b]\}$}
\\[1.5ex]
\multicolumn{2}{c}{$\Downarrow$~(\ref{appeq:k-call-not-lasso})}\\[1.5ex]
$\cellS{r_3}{\textnormal{al}}$ & $\cellS{ x\assgn x.b;\,\call{f(x)} \mfbox{g} \mfbox{f} \mfbox{g} \mfbox{f}}{\textnormal{k}}$\\
$\cellS{\cellS{r_2}{\textnormal{al-o}}\,\, \cellS{r}{\textnormal{al-o}}}{\textnormal{bkt-cf}}$ & $\cellS{\cellS{r_3}{\textnormal{al-o}}\,\ \cellS{r_1}{\textnormal{al-o}}}{\textnormal{bkt-cg}}$
\\[1.5ex]
\multicolumn{2}{c}{$\Downarrow$~(\ref{appeq:k-assign})}\\[1.5ex]
$\cellS{r_4}{\textnormal{al}}$ & $\cellS{ \call{f(x)} \mfbox{g} \mfbox{f} \mfbox{g} \mfbox{f}}{\textnormal{k}}$\\
$\cellS{\cellS{r_2}{\textnormal{al-o}}\,\, \cellS{r}{\textnormal{al-o}}}{\textnormal{bkt-cf}}$ & $\cellS{\cellS{r_3}{\textnormal{al-o}}\,\ \cellS{r_1}{\textnormal{al-o}}}{\textnormal{bkt-cg}}$\\
\multicolumn{2}{c}{where $r_4 = \{[x, x.a.b.a.b],\, [x.a, x.a.b.a.b.a],\, [x.b, x.a.b.a.b.b]\}$}
\\[1.5ex]
\multicolumn{2}{c}{$\Downarrow$~(\ref{appeq:k-call-lasso})}\\[1.5ex]
$\cellS{reg(r_2, r_4)}{\textnormal{al}}$ & $\cellS{ \mfbox{g} \mfbox{f} \mfbox{g} \mfbox{f}}{\textnormal{k}}$\\
$\cellS{\cellS{r_2}{\textnormal{al-o}}\,\, \cellS{r}{\textnormal{al-o}}}{\textnormal{bkt-cf}}$ & $\cellS{\cellS{r_3}{\textnormal{al-o}}\,\ \cellS{r_1}{\textnormal{al-o}}}{\textnormal{bkt-cg}}$\\
\\[1.5ex]
\multicolumn{2}{c}{$\Downarrow$~(*)(\ref{appeq:k-call-exit})}\\[1.5ex]
\multicolumn{2}{c}{
$\cellS{\{[x, x.(a.b)^*],\,[x.a, x.(a.b)^*.a],\, [x.b, x.(a.b)^*.b]  \}}{\textnormal{al}} \cellS{.}{\textnormal{k}}
\cellS{.}{\textnormal{bkt-cf}} \cellS{.}{\textnormal{bkt-cg}}$}
\end{tabular}
\]
\caption{Aliasing and mutual recursion in {\K}.}
\end{figure}

\newpage
\section{Example of aliasing in Maude}
\label{app:Maude-aliasing}

The intermediate class-based representation (in {\small \verb+aliasing-linked_list.maude+}) corresponding to the example
\[
\begin{array}{l}
x\,:=y;\\
\loopend{x\,:=x.next}
\end{array}
\]
is given as:
{\small
\begin{verbatim}
(class 'LINKED_LIST_TEST 
    create  { 'make }
    (
        procedure { 'ANY } 'make ( nil ) 
            require True 
            local 
                (  'x : [?, . , 'NODE] ;  'y : [?, . , 'NODE] ;  )
            do
                ( 
                assign ('x, 'y);
                until False loop ( assign ('x, 'x . 'next(nil)) ; ) end ;
                )
               ensure True
               rescue nil
           end ;
    )
	
    invariant  True 
end) ;

class 'NODE
    create  {'make}
    (
    attribute { 'ANY } 'next : [?, . , 'NODE] ;
    )
	
    invariant True 
end ;
\end{verbatim}
}
As can be seen from the code above, the syntax enables expressing Eiffel-like properties of classes by using assertions s.a. preconditions (introduced by the keyword \verb+require+), postconditions (through the keyword \verb+ensure+) and class invariants.

The ``entry point'' of the program corresponds to the function {\small \verb+'make+} in the (main) class {\small \verb+'LINKED_LIST_TEST+} and is set via:
{
\small
\begin{verbatim}
settings('LINKED_LIST_TEST, 'make, false, aliasing-on) .
\end{verbatim}
}
Observe that the flag for performing the alias analysis is switched to ``on''.
In {\small \verb+'LINKED_LIST_TEST+}, two local variables $x$ and $y$ of type {\small \verb+NODE+} are declared as running on the current processor ({\small \verb+.+}), {\it i.e.}, they are not \emph{separate}. The instruction {\small \verb+assign('x, 'y)+}, for instance, corresponds to the assignment $x\,:=y$. The class defining a {\small \verb+NODE+} structure (in a linked list) simply consists of a (non-separate) field {\small \verb+'next+} of type {\small \verb+NODE+}.

We run the example by executing the command:
{
\small
\begin{verbatim}
> maude SCOOP.maude ..\examples\aliasing-linked_list.maude
\end{verbatim}
}

The relevant parts of the corresponding Maude output are as follows:
{
\small
\begin{verbatim}
                     \||||||||||||||||||/
                   --- Welcome to Maude ---
                     /||||||||||||||||||\
            Maude 2.6 built: Mar 31 2011 23:36:02
            Copyright 1997-2010 SRI International
[...]
==========================================
rewrite [100] in SYSTEM : 
[...]
{0}proc(1) ::
until False loop
  assign('x, 'x . 'next(nil)) ;
end ;
[...],
100,
aliasing-on
({['x ; 'y . 'next . 'next . 'next . 'next .
    'next . 'next . 'next . 'next . 'next . 'next . 'next . 'next .
    'next . 'next . 'next . 'next . 'next . 'next . 'next . 'next .
    'next . 'next . 'next . 'next . 'next . 'next . 'next . 'next .
    'next . 'next . 'next . 'next . 'next . 'next . 'next . 'next .
    'next . 'next . 'next . 'next . 'next . 'next]} U
 {['x . 'next ; 'y . 'next . 'next . 'next . 'next . 'next . 'next . 
    'next . 'next . 'next . 'next . 'next . 'next . 'next . 'next .
    'next . 'next .  'next . 'next . 'next . 'next . 'next . 'next . 
    'next . 'next . 'next . 'next . 'next . 'next . 'next . 'next .
    'next . 'next . 'next . 'next . 'next . 'next . 'next . 'next .
    'next . 'next . 'next . 'next . 'next]})
[...]
state
  [...]
  heap [...]
  store [...]
end
\end{verbatim}
}

In short, after $100$ rewriting steps, the current processor {\small \verb+{0}proc(1)+} has the execution corresponding to $\loopend{x\,:=x.next}$ on top of its instruction stack, and the aliasing information contains (the dot-complete closure of) the relation $\{[x, y.next^{42}]\}$. Moreover, the output displays the contents of the current system state, by providing information on the \emph{heap} and \emph{store}, as formalized in~\cite{DBLP:conf/acsd/MorandiSNM13}.
\end{document}